\documentclass[11pt]{article}
\usepackage{amsmath}
\usepackage{amsfonts}
\usepackage{amssymb}
\usepackage{amsthm}
\usepackage{bbm}
\usepackage{color}
\usepackage{mathrsfs}
\usepackage{complexity}
\usepackage{float}
\usepackage{algorithm}
\floatname{algorithm}{Protocol}
\usepackage{multirow}
\usepackage{graphicx}
\usepackage[margin=0.8in]{geometry}
\usepackage[font=small,margin=0.5in]{caption}
\usepackage[normalem]{ulem}
\usepackage{mathtools}
\usepackage{qcircuit}

\newcommand{\COMMENT}[1]{}
\newcounter{linenum}

\definecolor{bluegreen}{cmyk}{0.85,0.3,0.2,0.25}

\newcommand{\ket}[1]{\left\lvert #1 \right\rangle}

\usepackage{color}

\newtheorem{theorem}{Theorem}

\newtheorem{corollary}{Corollary}

\newtheorem{definition}{Definition}

\hyphenation{crypto-graphically}
\hyphenation{super-computers}
\hyphenation{inter-ference}

\usepackage[affil-it]{authblk}

\begin{document}
\graphicspath{{diagrams/}}

\title{Multiparty Delegated Quantum Computing}
\author[1,2]{Elham Kashefi}
\author[1,3]{Anna Pappa}
\affil[1]{School of Informatics, University of Edinburgh, UK}
  \affil[2]{LIP6, CNRS, University Pierre et Marie Curie, France}
  \affil[3]{Department of Physics and Astronomy, University College London, UK}

\maketitle
\abstract{\noindent Quantum computing has seen tremendous progress in the past years. However, due to limitations in scalability of quantum technologies, it seems that we are far from constructing universal quantum computers for everyday users. A more feasible solution is the delegation of computation to powerful quantum servers on the network. This solution was proposed in previous studies of Blind Quantum Computation, with guarantees for both the secrecy of the input and of the computation being performed. In this work, we further develop this idea of computing over encrypted data, to propose a multiparty delegated quantum computing protocol in the measurement-based quantum computing framework. We prove security of the protocol against a dishonest Server and against dishonest clients, under the asumption of common classical cryptographic constructions.}

\vspace{0.5in}

\section{Introduction}

Since the early days of quantum computing and cryptography, research has been focused on finding secure communication protocols for different cryptographic tasks. However, the no-go results for bit commitment \cite{LoChau,Mayers} and oblivious transfer \cite{Lo97} soon provided evidence that it is not possible to guarantee perfect security against any type of quantum adversaries. In fact, the authors of \cite{SSS09} showed that any non-trivial protocol that implements a cryptographic primitive, necessarily leaks information to a dishonest player. It directly follows that two-party unitaries, and by consequence multi-party ones, cannot be used without further assumptions, to securely implement cryptographic protocols. An important question that rises then is what are the cryptographic assumptions that are needed in order to achieve secure multiparty computation.

Dupuis et al \cite{DNS10,DNS12} examined the case of two-party computation and showed that access to an AND gate for every  gate that is not in the Clifford group, plus a final SWAP gate are required, in order to guarantee security. In the multiparty setting, Ben-Or et al \cite{Benor} rely on an honest majority assumption in order to build a Verifiable Quantum Secret Sharing scheme that is the basis of the multiparty quantum computation. From a different perspective, a lot of research in quantum computing has been focused on delegation of computation to powerful servers. This is because the current state-of-the-art is still far from constructing scalable quantum devices, and it seems that the first quantum networks will rely on the use of a limited number of powerful quantum servers. Common requirements from delegated computation schemes are that they are provably secure (the input of the computation remains private), blind (the computation performed is hidden from the Server) and verifiable (honest participants can verify the correctness of the computation).

In this work, we extend previous research on quantum computing, by examining the problem of secure delegation of a multiparty quantum computation to a powerful Server. More specifically, we suppose that a number of clients holding some quantum input, want to perform a unitary operation on the state, but are lacking the computational abilities to do so, therefore would like to delegate the computation to a Server. In the proposed protocol, the quantum operations required from the clients are limited to  creating $\ket{+}$ states and applying X gates and rotations around the $\hat{z}$-axis. To be secure against coalitions of dishonest clients, all participants should contribute to some form of quantum encryption process.  In previous protocols~\cite{Benor}, quantum communication between all clients was required in order to provide security against dishonest participants. However, by using a remote state preparation procedure, we manage to remove any quantum communication between clients, making our protocol adaptable to a client/server setting. More interestingly, the quantum communication from the clients to the Server can be done in single-qubit rounds, not necessitating any quantum memory from the clients. Furthermore, all quantum communication takes place in the preparation (offline) phase, which makes the computation phase more efficient, since only classical communication is required. 

As already mentioned, in order to provide any type of security in the multiparty setting, we need to make some assumptions about the dishonest parties. In this work, we will need two assumptions. First, we will assume that the clients have secure access to classical multiparty functionalities, which we will treat as oracles. This is a common construction in classical secure multiparty computation, and uses assumptions on the participating parties, like honest majority or difficulty to invert specific one-way functions. The second assumption is that a set of malicious clients cannot corrupt the Server, and the other way around. This means that we only prove security against two adversarial models, against a dishonest Server, and against a coalition of dishonest clients. Security in the more general scenario where a Server and some clients collaborate to cheat, remains as an open question (however see \cite{Wallden16} for a relevant model with only one client, where the Server is also allowed to provide an input).

Finally, we should note that in this work, we are focusing on proving security against malicious quantum adversaries in order to provide a simple protocol for quantum multiparty computation. As such, no guarantee is given on the correctness of the computation outcome. However, in principle, it might be possible to add verification processes in our protocol, by enforcing honest behaviour, following the work of \cite{DNS12} and \cite{FK13}.

\section{Results}

We propose a cryptographic protocol that constructs a Multiparty Delegated Quantum Computing resource using quantum and classical communication between $n$ clients and a Server. We want to guarantee that the private data of the clients remain secret during the protocol. Here, each client's  data consists of his quantum input and output, while we consider that the computation performed is not known to the server (it can be known to all clients, or to a specific client that delegates the computation). The protocol consists of two stages, a preparation one, where all the quantum communication takes place, and the computation one, where the communication is purely classical. During the preparation stage, a process named ``Remote State Preparation'' \cite{DKL12} (see Figures \ref{fig:algo1} and \ref{fig:algo2}) asks the clients to send quantum states to the Server, who then entangles them and measures all but one. This process allows the clients to remotely prepare quantum states at the Server's register that are ``encrypted'' using secret data from all of them, without having to communicate quantum states to each other. 

The above process however could allow the clients to affect the input of the other clients by changing the classical values they use throughout the protocol. We therefore ask all clients to commit to using the same classical values for the duration of the protocol by verifiably secret-sharing them. Further, to check that the clients are sending the correct quantum states at the preparation phase of the protocol, they are asked to send many copies of them, which are then (all but one) checked for validity. 

At the end of the remote state preparation phase, the Server is left with several quantum states that are ``encrypted'' using secret data from all the clients. These states will be used to compute the desired functionality, in the Measurement-based Quantum Computing (MBQC) framework. Due to the inherent randomness of the quantum measurements that are happening throughout the computation, there is an unavoidable dependency between the measurement angles of the qubits in the different layers of computation. This means that the clients need to securely communicate between them and with the Server, in order to jointly compute the updated measurement angles, taking into account the necessary corrections from the previous measurements according to the dependency sets of each qubit. This procedure is purely classical, and uses Verifiable Secret Sharing (VSS) schemes and a computation oracle to calculate the necessary values at each step of the protocol and to ensure that the clients behave honestly.

Finally, in the output phase, each output qubit is naturally encrypted due to the same randomness from previous measurements that propagated during the computation. Each client receives the  appropriate encrypted output qubit, and computes the values necessary for the decryption from the previously shared values of all clients. 

In comparison with previous work on two-party \cite{DNS12} and multiparty \cite{Benor} computation, our protocol does not require that the clients possess quantum memories, measurement devices or entangling gates, but that they only perform single qubit operations.  This practical aspect of our protocol will be useful in near-future hybrid quantum-classical networks, since clients with limited quantum abilities will be able to delegate heavy computations to a powerful quantum Server. 
The security of our protocol is proven in two settings, against a dishonest Server, and a coalition of malicious clients. It is also straightforward to compose it with other protocols, either in parallel or sequentially, since security is defined in a composable framework. It remains to study whether the proposed protocol remains secure against a dishonest coalition between clients and the Server or if there is an unavoidable leakage of information.

\section{Materials and Methods}

\subsection{Measurement-based Quantum Computing}

Delegated computation is commonly studied in the MBQC model~\cite{RB01}, where a computation is described by a set of measurement angles on an entangled state. A formal way to describe how MBQC works was proposed in \cite{DK06}, and is usually referred to as an MBQC pattern. In the general case of quantum input and quantum output, such a pattern is defined by a set of qubits ($V$), a subset of input qubits ($I$), a subset of output qubits ($O$), and a sequence of measurements $\{\phi_j\}$ acting on qubits in $O^c:=V\setminus O$. Due to the probabilistic nature of the measurements, these angles need to be updated according to a specific structure. This structure is described by the \emph{flow} $f$ of the underlying graph $G$ of the entangled state. The flow is a function from measured qubits to non-input qubits along with a partial order over the nodes of the graph such that each qubit $j$ is X-dependent on qubit $f^{-1}(j)$ and Z-dependent on qubits $i$ for which $j\in N_G(f(i))$, where $N_G(j)$ is the set of neighbours of node $j$ in graph $G$. We will denote the former set of qubits by $S^X_j$ and the latter set of qubits by $S^Z_j$. The computation of a unitary $U$ on quantum input $\rho_{in}$, described by angles $\{\phi_j\}$ is shown in Figure~\ref{fig:z-prog circuit}.

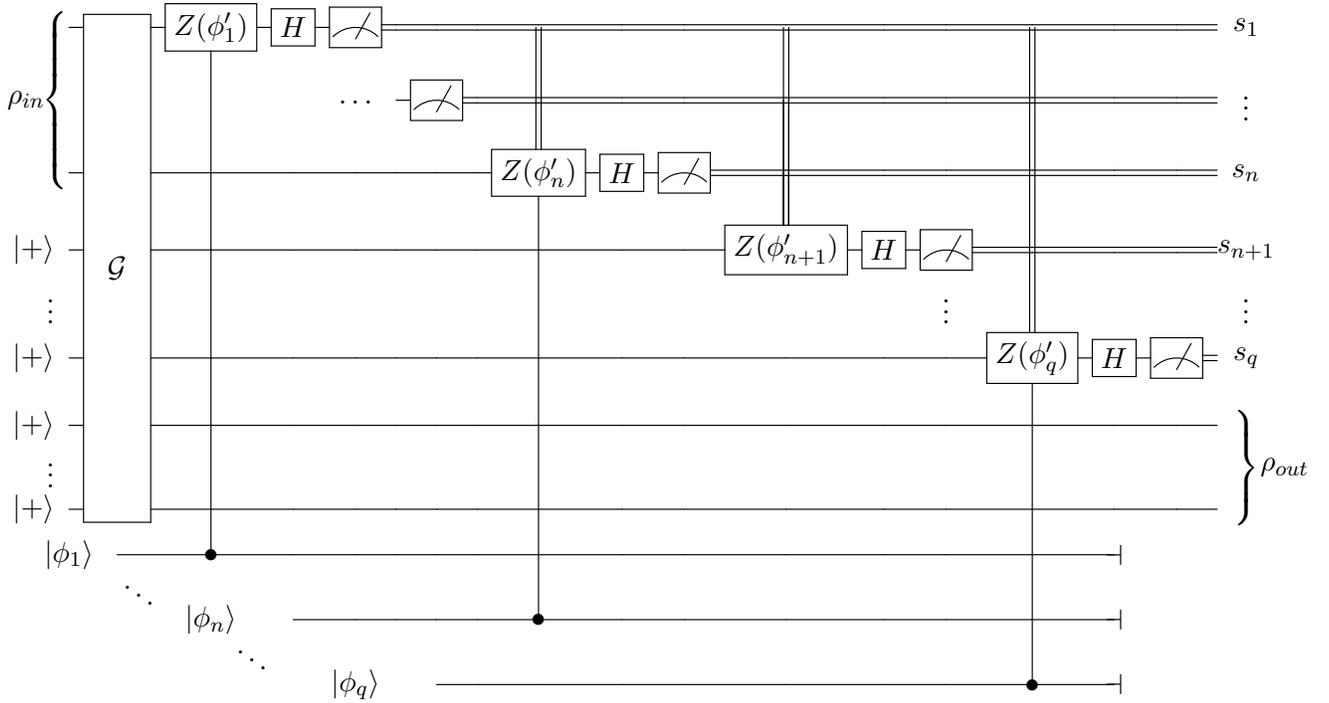
\begin{figure}[!h]

\centerline{
\Qcircuit @C=0.5em @R=1em {
				&\multigate{8}{\ \mathcal{G}\ } 	&\gate{Z(\phi'_1)}&\gate{H} &\meter	&	\cw 	&\cw		&\cw	&\cw \cwx[2]	&\cw &\cw &\cw \cwx[3] &\cw &\cw &\cw \cwx[5] & \cw & \cw & \cw & & s_1\\	
\lstick{\rho_{in} \ }		&	&				 &		   &\dots		&		 &	\meter 	&	\cw		&	\cw	&\cw &\cw &\cw \cwx[2] &\cw &\cw	&\cw &\cw &\cw	&\cw & & \vdots\\
			 	&\ghost{\ \mathcal{G}\ }	&\qw	&\qw			 &\qw	   &\qw		&\qw &\qw & \gate{Z(\phi'_n)}	&\gate{H}	&\meter & \cw \cwx[1]&\cw  &\cw & \cw & \cw & \cw & \cw & & s_n\\
\lstick{\ket{+}}		&		\ghost{\ \mathcal{G}\ }	&\qw   			 &\qw 	   &\qw		& \qw	 &\qw		&\qw	&\qw	&\qw	&\qw	&\gate{Z(\phi'_{n+1})}&\gate{H}	&\meter 	&\cw	&\cw & \cw & \cw & & s_{n+1}\\
\lstick{\vdots}  		&	&				 &		   &		&		 &			&		&		&	& & & &\vdots		& & &  & & & \vdots\\
\lstick{\ket{+}} 		&		\ghost{\ \mathcal{G}\ }	&\qw   			 &\qw 	   &\qw		& \qw	 &\qw		&\qw		&\qw	&\qw	& \qw	&\qw	&\qw & \qw & \gate{Z(\phi'_q)}&\gate{H}	&\meter 	&\cw	& 	& s_q\\
\lstick{\ket{+}} 		&		\ghost{\  \mathcal{G}\ }	&\qw   			 &\qw 	   &\qw		& \qw	 &\qw		&\qw		&\qw	&\qw	& \qw	&\qw	&\qw & \qw & \qw	& \qw	&\qw	& \qw	& &\\
\lstick{\vdots}  		&	&				 &		   &		&		 &			&		&		&	& & & &		& & &  & & & & &~~\rho_{out}\\
\lstick{\ket{+}} 		&		\ghost{\ \mathcal{G}\ }	&\qw   			 &\qw 	   &\qw		& \qw	 &\qw		&\qw		&\qw	&\qw	& \qw	&\qw	&\qw & \qw & \qw	&\qw 	&\qw	& \qw	& &\\
\ket{\phi_1}	 		& 	&\ctrl{-9} 	 &	\qw	   &	\qw	 &\qw	 &\qw 	&	\qw		&\qw	&\qw & \qw		&\qw	&\qw	&	\qw		&\qw	&\qw \dashv \\
  				& ~~~~ \ddots			&	 &		   &		&		 &			&			&		&		&\\
				 & 	&\ket{\phi_n}			 & & \qw & \qw & \qw &\qw		 &\ctrl{-9}		 &	\qw &	\qw		&\qw	&\qw	 	&		\qw	&	\qw	&	\qw	\dashv & \\
  				&	&	~~~~~~~~ \ddots		 &		   &		&		 &			&			&		&		& & &\\				 
				 &	&						 &  &\ket{\phi_q}& 		  &  & \qw &\qw &	\qw &\qw	 &	\qw	 &	\qw	 &	\qw		&	\ctrl{-8}		&	\qw \dashv
\gategroup{1}{1}{3}{1}{1em}{\{}				
\gategroup{7}{19}{9}{19}{1em}{\}}				
				 }		
	}
\caption{A circuit description of the Measurement-based Quantum Computation model}\label{fig:z-prog circuit}
\end{figure}

\subsection{Abstract Cryptography}
We will use the Abstract Cryptography framework \cite{MauRen11} to model the properties of our protocol. This is a top-down approach that defines in an abstact way the components of a system and its interactions with the environment. In a similar way as most security frameworks, the goal in AC is to prove that a functionality in the real world, is indistinguishable from a functionality in the ideal world. In the real world, the parties interact through communication channels, while in the ideal world the parties have access to the ideal system that computes the functionality.

We define a \emph{resource} (denoted by capital letters $\mathcal{W},\mathcal{V}$ etc) as a system with a set of interfaces $\mathcal{I}=\{1,\dots,n\}$. This set of interfaces corresponds to the parties involved in a protocol and provides them with functionalities like inputting data and receiving outputs. In general, we can interpret any cryptographic protocol as an effort to construct an ideal functionality (i.e. resource) from other components. For this, we need to introduce the notion of a \emph{converter}. Converters are systems with two interfaces, an inside and an outside interface: the outside gets inputs and gives outputs, and the inside is connected to a resource. 
A protocol $\pi$ can be viewed as a set of converters $\{\pi_i\}_i^n$ where $n$ is the number of participants in the protocol. Therefore, a protocol connected to a communication channel resource, can be thought of as a new resource. Another type of converters commonly used in security proofs are called \emph{simulators}.  A simulator $\sigma$ is used in order to reproduce the behaviour of some other resource, therefore making it easier to compare between two different resources. 

Finally, we define a \emph{filtered resource} as a pair of a resource and a converter that blocks malicious activity. For example, the filtered resource $\mathcal{W}_f$ is comprised of the resource $\mathcal{W}$ and the converter $f$ that blocks all malicious activity from dishonest parties, leaving only legitimate input through to the resource $\mathcal{W}$.

\subsubsection{Security}
We can measure how close two resources are by defining a pseudo-metric $d$ on the space of the resources. Any chosen pseudo-metric needs to have three properties: $d(\mathcal{V},\mathcal{V})=0$ (identity), $d(\mathcal{V},\mathcal{W})=d(\mathcal{W},\mathcal{V})$ (symmetry) and $d(\mathcal{V},\mathcal{W})\leq d(\mathcal{V},\mathcal{Y})+d(\mathcal{Y},\mathcal{W})$ (triangle inequality). Here we will use the \emph{distinguishing advantage} of a distinguisher $\mathcal{Z}$ as a measure of distance between two resources: 
\[
d(\mathcal{V},\mathcal{W}):=max_{\mathcal{Z}}\{\Pr[\mathcal{ZV}=1]-\Pr[\mathcal{ZW}=1]\}
\]
The distinguisher $\mathcal{Z}$ is a system that has as many inside interfaces connecting to the resource as the number of parties communicating with the latter, and an extra one on the outside to output a single bit. Simply put, the above measure of distance shows how well can a distinguisher understand whether he is interacting with resource $\mathcal{V}$ or $\mathcal{W}$. The reason why a distinguisher is used to quantify security is \emph{composability}. Previous studies of secure quantum computation proved security in the stand-alone model, which does not directly imply that security is maintained during parallel execution of protocols or in cases where there is a subsequent leakage of information to dishonest participants. A prominent example from Quantum Key Distribution (QKD) is demonstrated in~\cite{PR14,KRBM05}. There, it is shown that the security definitions requiring that the mutual information of the produced key and the adversary's system is small, are not adequate when the adversary's system and the key are correlated, since a partial leakage of the key in subsequent steps can lead to the compromise of the whole key.

A general security definition for constructing one resource from another, is given in \cite{DFPR14}:

\begin{definition}\label{security}
Let $\mathcal{V}$, $\mathcal{W}$ be two resources with interfaces in set $\mathcal{I}$. A protocol $\pi=\{\pi_i\}_{i\in\mathcal{I}}$ securely $\epsilon$-constructs resource $\mathcal{V}$  from $\mathcal{W}$  if there exist converters $f=\{f_i\}_{i\in\mathcal{I}}$, $g=\{g_i\}_{i\in\mathcal{I}}$ and $\sigma=\{\sigma_i\}_{i\in\mathcal{I}}$ such that:
\begin{equation}\label{distance}
\forall \mathcal{P}\subseteq \mathcal{I}, ~~~~~d(\pi_{\mathcal{P}}g_{\mathcal{P}}\mathcal{V},\sigma_{\mathcal{I}\backslash\mathcal{P}}f_{\mathcal{P}}\mathcal{W})\leq \epsilon
\end{equation}
where for each converter $x=\{x_i\}_{i\in\mathcal{I}}$, we define $x_{\mathcal{P}}:=\{x_i\}_{i\in\mathcal{P}}$.
\end{definition}

\subsection{Multiparty Delegated Quantum Computing}
In this section, we will give some necessary definitions for multiparty delegated quantum computing protocols. We will consider multiple clients $C_1,...,C_n$ that have registers $\mathcal{C}_1,..,\mathcal{C}_n$.  To allow the computation to be done in a delegated way, we also introduce the notion of a \emph{Server} $S$, who is responsible for performing the computation of a unitary $U$ on input $\rho_{in}$. The Server has register $\mathcal{S}$, but no legal input to the computation. We also denote with $\mathcal{D}(\mathcal{A})$ the set of all possible quantum states (i.e. positive semi-definite operators with trace 1) in register $\mathcal{A}$. We denote the input state (possibly entangled with the environment $R$) as:

 \begin{equation}\label{eq:input_state}
 \rho_{in}\in \mathcal{D}(\mathcal{C}_1\otimes\dots\otimes\mathcal{C}_n\otimes \mathcal{R})
 \end{equation}

In what follows, we consider that the input registers of the participants also contain the classical information necessary for the computation. We denote with $\mathcal{L}(\mathcal{A})$ the set of linear mappings from $\mathcal{A}$ to itself, and  we call a superoperator $\Phi:\mathcal{L}(\mathcal{A})\rightarrow \mathcal{L}(\mathcal{B})$ that is completely positive and trace preserving, a \emph{quantum operation}. Finally, we will denote with $\mathbb{I}_{\mathcal{A}}$ and $\mathbf{1}_{\mathcal{A}}$ the totally mixed state and the identity operator respectively, in register $\mathcal{A}$.

In order to analyse the properties of the protocol to follow, we will first consider an abstract system that takes as input $\rho_{in}$ and the computation instructions for implementing $U$ in MBQC (i.e. the measurement angles $\{\phi_j\}$), and outputs a state $\rho_{out}$.  We will call such a resource an \emph{Ideal Functionality} because this is what we want to implement. In order to allow the Server to act dishonestly, we also allow the Ideal Functionality to accept input from the Server, that dictates the desired deviation in the form of quantum input and classical information (Figure~\ref{fig:MPQC}). In the case where the Server is acting honestly, the Ideal Functionality is outputting the correct output $\rho_{out}=(U\otimes\mathbf{1}_{\mathcal{R}})\cdot\rho_{in}$, where for ease of use, we will write $U\cdot \rho$ instead of $U\rho U^{\dagger}$ each time we talk about applying a unitary operation $U$ to a quantum state $\rho$. 

\begin{definition}[MPQC Delegated Resource]\label{def:Ideal}
A \emph{Multiparty Quantum Computation delegated resource} gets input $\rho_{in}$ and computation instructions from $n$ clients and leaks the size of the computation $q$ to the Server.  The Server can then decide to input a quantum map and a quantum input. The output of the computation is $\rho_{out}=(U\otimes \mathbf{1}_{\mathcal{R}})\cdot\rho_{in}$ if the Server does not deviate, otherwise the output is defined by the quantum deviation on the inputs of the Server and the clients.
\end{definition}

\begin{figure}
\begin{center}
\includegraphics[scale=0.8]{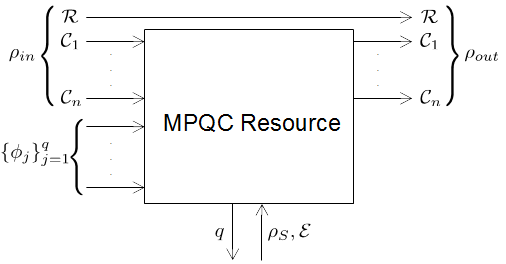}
\caption{The Ideal functionality that implements a Multiparty Quantum Computation given by measurement angles $\{\phi_j\}_{j=1}^q$ on input $\rho_{in}$.}\label{fig:MPQC}
\end{center}
\end{figure}

A protocol can in general be modeled by a sequence of local quantum operations on the participants' registers together with some oracle calls, which are joint quantum operations on the registers. Here we will consider the delegated version of communication protocols \cite{DNS10,DNS12,Benor}, where $n$ clients are delegating the computation to a Server.

\begin{definition}
We denote a $t$-step delegated protocol with oracle calls between clients $C_k$ $(k\in[n])$ and a Server $S$, with $\pi\mathcal{O} =(\{\pi_k\}_{k=1}^n,\pi_S, \mathcal{O})$. If we denote by $\mathcal{C}_k$ and $\mathcal{S}$ the registers of the clients and the Server respectively, then:
\begin{itemize}
\item Each client's strategy $\pi_k,~k\in[n]$ consists of a set of local quantum operators $(L_1^k,\dots,L_t^k)$ such that $L_i^k:\mathcal{L}(\mathcal{C}_k)\rightarrow \mathcal{L}(\mathcal{C}_k)$ for $1\leq i \leq t$.
\item The Server's strategy $\pi_S$ consists of a set of local quantum operators $(L_1^S,\dots,L_t^S)$ such that $L_i^S:\mathcal{L}(\mathcal{S})\rightarrow \mathcal{L}(\mathcal{S})$ for $1\leq i \leq t$.
\item The oracle $\mathcal{O}$ is a set of global quantum operators $\mathcal{O} = (\mathcal{O}_1,\dots,\mathcal{O}_t)$ such that $\mathcal{O}_i : \mathcal{L}(\mathcal{C}_1\otimes\dots\otimes \mathcal{C}_n\otimes\mathcal{S})\rightarrow \mathcal{L}(\mathcal{C}_1\otimes\dots\otimes \mathcal{C}_n\otimes\mathcal{S})$ for $1 \leq i \leq t$.
\end{itemize}
\end{definition}
\vspace{0.2in}
\noindent{}Therefore, at each step $i$ of the protocol $\pi\mathcal{O}$, all participants apply their local operations and they also jointly apply a global $\mathcal{O}_i$. If we define the joint quantum operation $L_i=L_i^1\otimes \dots\otimes L_i^n\otimes L_i^S$, then the quantum state at step $i$ when the input state $\rho_{in}$ is defined by Equation \ref{eq:input_state}, is $\rho_i(\rho_{in}):=(\mathcal{O}_i L_i\dots\mathcal{O}_1 L_1)\cdot \rho_{in}$. At the end of the protocol, the output will be: $\pi\mathcal{O}(\rho_{in})=(\mathcal{O}_t L_t\dots \mathcal{O}_1 L_1)\cdot \rho_{in}$.

A commonly used oracle is the \emph{communication oracle} that transmits information from one participant to the other, by just moving the state between registers. Due to no-cloning however, when the information transmitted is a quantum state, we require that the oracle also erases the transmitted information from the original party's register. 

Another oracle that we will use in our protocol is what we call the \emph{computation oracle} which can be thought of as a classical ideal functionality (i.e. a box that takes inputs from all parties and outputs the correct outputs of the functionality on the speific inputs). All classical multiparty computation in this work will be done with the help of such a \emph{computation oracle}. Under standard cryptographic assumptions, there exist classical protocols for building such oracles that emulate in a composable way any classical multiparty functionality. In~\cite{Canetti01} this is shown for an honest majority of participants, given that they share pairwise secure classical channels and a broadcast channel. A similar result based on Oblivious Transfer appears in~\cite{Ishai08}. The quantum lifting theorem of Unruh~\cite{Unruh10} states that if there exists such a construction that is secure against classical adversaries, then the same also holds against quantum adversaries, therefore allowing us to replace any classical multiparty computation in a quantum protocol, by using the computation oracle. This result was proven in the Universal Composability framework, but we can also use it in Abstract Cryptography proofs, since the two frameworks are equivalent when there is only one malicious participant (i.e. we need to build one global simulator).

\subsubsection{Properties}
In the following Section, we will present a delegated protocol $\pi\mathcal{O}$ that emulates the MPQC resource of Figure~\ref{fig:MPQC}. Definition \ref{security} gives $2^n$ inequalities to be satisfied for such a protocol to be secure against all possible sets of dishonest participants. However we can group them into fewer distinct cases according to our adversarial setting: (a) when all clients and the Server are honest, (b) when the Server is dishonest, (c) when a subset $\mathcal{D}$ of clients is dishonest. For the case (a) where everyone acts honestly, Definition \ref{security} for a filtered MPQC resource $M_\perp$ becomes:

\begin{equation}\label{correct}
d(\pi \mathcal{O}, \mathcal{M}_\perp)\leq \epsilon
\end{equation}
In case (b) when the Server is acting dishonestly, Definition \ref{security} becomes:
\begin{equation}\label{secure1}
d(\pi_1\dots\pi_n\mathcal{O},\sigma_S \mathcal{M})\leq \epsilon
\end{equation}
where $\sigma_S$ is a simulator for the dishonest server. Finally for case (c), when a subset $\mathcal{D}$ of clients is dishonest, Definition \ref{security} becomes:
\begin{equation}\label{secure2}
d(\pi_{\mathcal{H}}\mathcal{O}\pi_S,\sigma_{\mathcal{D}}\mathcal{M}_\perp)\leq \epsilon
\end{equation}
where $\mathcal{H}$ is the set of honest client and $\sigma_{\mathcal{D}}$ is a simulator for the dishonest clients. 

In the above equations, the filter $\perp$ blocks input from the Server's side, since the MPQC resource of Definition~\ref{def:Ideal} accepts deviated behaviour only from the Server. The clients on the other hand, have the liberty to choose the quantum state that they will give as input to the MPQC resource. However, this does not mean that the clients behave honestly during the protocol. We will see that the protocol ``enforces'' honest behaviour to clients by asking them to secret share their classical values in a verifiable way, in order to commit to use the same values during the protocol. This is done using Verifiable Secret Sharing (VSS) schemes, that allow a dealer to share their information with other parties, in such a way that a group of honest parties can reconstruct the share even when the dealer is malicious. VSS schemes can be viewed as a multiparty computation, and we can therefore use a computation oracle under the same cryptographic assumptions as discussed above. 

The distinghishing advantage of Equations~(\ref{correct}), (\ref{secure1}) and (\ref{secure2}) can reduce to simple measures of distance between the states that a distinguisher sees, when interacting with the real and ideal system. For example if the outputs of the resources are classical strings, then a distinguisher will be given strings sampled from either the probability distribution produced by the ideal or the real resource. He then needs to decide from which one the strings were sampled, therefore the distinguishing advantage is equal to the total variation distance between the two probability distributions. If the outputs of the resources are quantum states, then the distringuishing avantage is given by the Helstrom measurement, which depends on the trace distance of the states of the two systems.


\subsection{The Protocol}

In this section we propose a cryptographic protocol that constructs an MPQC resource using quantum and classical communication between the $n$ clients and the Server. We suppose that the clients want to perform a unitary $U$ on their quantum inputs, translated in an MBQC pattern on a brickwork state using measurement angles $\{\phi_j\}_{j=1}^q$, where $q=|O^c|$. For simplicity we also consider that each client $C_k$ ($k\in[n]$) has one qubit as input and one qubit as output, but it is easy to generalise to any other case. We will use the following labeling: client $C_k$ has as input qubit ``$k$'' and as output qubit ``$q+k$'', while the first qubit of the second column in Figure~\ref{brickwork} has label ``$n+1$'', the last one in the second column ``$2n$'' etc.

We want to guarantee that the private data of the clients remain secret during the protocol. Here, each client's  data consists of the quantum input and output, while we consider that the measurement angles are not known to the server (they can be known to all clients, or to a specific client that delegates the computation). The protocol first starts with a process named ``Remote State Preparation'' \cite{DKL12} (see Figures \ref{fig:algo1} and \ref{fig:algo2}). The clients send quantum states to the Server, who then entangles them and measures all but one. In the case where one of the clients has a quantum input, he sends that quantum state one-time padded to the Server, while the rest of the clients send $Z$-rotated $\ket{+}$ states to the Server (Protocol \ref{Algo2}). In the case of the extra ``operational'' qubits in $O^c \setminus I$, all clients send rotated $\ket{+}$ states to the Server (Protocol \ref{Algo3}). In this way, the clients remotely prepare quantum states at the Server's register that are ``encrypted'' using secret data from all of them, without having to communicate quantum states to each other. 

However, since each client is supposed to only choose their own quantum input, and not affect the input of the other clients, the protocol should ask the clients to commit to using the same classical values for the duration of the protocol. This is done by using a VSS scheme each time a classical value is chosen by a client. In the case of the ``Remote State Preparation'' process, each time a client sends a $Z$-rotated $\ket{+}$ state which will need to be corrected at a later point in the protocol. In order to ensure that the ``reverse'' rotation is used later, the clients send many copies of randomly $Z$-rotated $\ket{+}$ states, and commit (via VSS) to the rotations used. They then get tested by the Server and the rest of the clients on the correctness of the committed values. A similar commitment takes place for the quantum one-time pad that each client performs on their quantum input, since the classical values used, affect the measurement angles of consecutive layers of computation.

At the end of the remote state preparation phase, the Server entangles the non-measured states in a universal graph state (for example in the brickwork state of Figure \ref{brickwork} \cite{FK13}). Since the proposed protocol uses MBQC to compute the desired functionality, there is an unavoidable dependency between measurement angles of the qubits in different layers of computation. This means that the clients need to securely communicate between them and with the Server, in order to jointly compute the updated measurement angles, taking into account the necessary corrections from the previous measurements and the dependency sets of each qubit. This procedure is purely classical, and uses VSS schemes and a computation oracle to calculate the necessary values at each step of the protocol and to ensure that the clients behave honestly.

\begin{figure}[h]
\begin{center}
\includegraphics{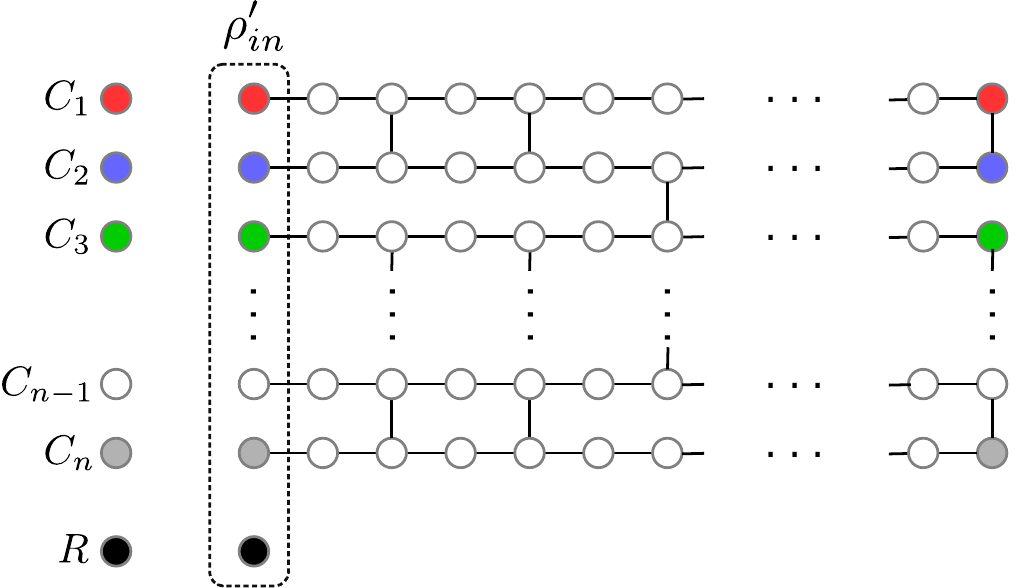}
\end{center}
\caption{The brickwork state with the encrypted quantum input in the first layer. The colors of the qubits denote their origin, while the encrypted input state can also be entangled with the environment $R$. During the computation, all qubits will be measured except the last layer.}\label{brickwork}
\end{figure}

Finally, in the output phase, each output qubit $j\in O$ is naturally encrypted due to the corrections propagated during the computation. Qubit $j$ is sent to the legitimate recipient $C_{j-q}$, while the operation that is needed to decrypt it is the $X^{s^X_j}Z^{s^Z_j}$. The classical values necessary to compute $s^X_j$ and $s^Z_j$ are then computed from the secret shares of all clients and sent to client $C_{j-q}$, who applies the necessary quantum operation.

\begin{algorithm}[H]
\floatname{algorithm}{Protocol}
\caption{(Enforcing honest behavior for client $C_k$)}
\label{Algo1}
\begin{enumerate}
\item Client $C_k$ sends $m$ qubits $\ket{+_{\theta_i^k}}=\frac{1}{\sqrt{2}}(\ket{0}+e^{i\theta_i^k}\ket{1})$ to the Server and secret-shares the values $\{\theta_i^k\}_{i=1}^m$ with all clients, using a VSS scheme.
\item The Server requests the shared values from the clients for all but one qubit, and measures in the resconstructed bases. If the bases agree with the results of the measurements, then with high probability, the remaining state is correctly formed in relation to the shared angle.
\end{enumerate}
\end{algorithm} 


\begin{algorithm}[H]
\floatname{algorithm}{Protocol}
\caption{(State preparation for $j\in I$)}
\label{Algo2}
Server stores states received from clients $C_k$ to distinct registers $\mathcal{S}_k\subset \mathcal{S}$ ($k=1,\dots,n$);

for $k=1,\dots,n-1$

\hspace{0.5in}if $k=j$ then

\hspace{1in}break;

\hspace{0.5in}if $k=n-1$ and $j=n$ then

\hspace{1in}break;

\hspace{0.5in}if $k=j-1$, then

\hspace{1in}CNOT on $\mathcal{S}_k\otimes\mathcal{S}_{k+2}$;

\hspace{0.5in}else

\hspace{1in}CNOT on $\mathcal{S}_k\otimes\mathcal{S}_{k+1}$;

\hspace{0.5in}end;

\hspace{0.5in}measure state in $\mathcal{S}_k$ and get outcome $t_j^k$;

end;

if $j=n$ then 

\hspace{0.5in} CNOT on $\mathcal{S}_{n-1}\otimes\mathcal{S}_n$;

\hspace{0.5in}measure state in $\mathcal{S}_{n-1}$ and get outcome $t_n^{n-1}$;

else

\hspace{0.5in}CNOT on $(\mathcal{S}_{n}\otimes\mathcal{S}_j)$;

\hspace{0.5in}measure state in $\mathcal{S}_n$ and get outcome $t_j^n$;

end;

\end{algorithm}

\begin{figure}[H]

\centerline{
\Qcircuit @C=0.5em @R=1.5em {
\lstick{C_1:~~~~~~~~~~\ket{+_{\theta_j^1}}}	&\qw	&\targ &\meter	&	\cw 	& t_j^1\\	
\lstick{C_2:~~~~~~~~~~\ket{+_{\theta_j^2}}}  	&\qw		&\ctrl{-1}&\qw	&\targ		 &	\meter 	&	\cw		&	t_j^2\\
\lstick{C_3:~~~~~~~~~~\ket{+_{\theta_j^3}}} 	&\qw		&\qw	  &\qw	  	&\ctrl{-1}		&\qw &\targ{+1} & \meter	&	\cw		& \cw	&~~~~  t_j^3\\
\lstick{\vdots \hspace{1in}\vdots}  				&				 &		   &		&		 &			&	\vdots	&		&	& & & & 	&  &  & & \ddots \\
\lstick{C_n:~~~~~~~~~~\ket{+_{\theta_j^n}}}				&\qw   			 &\qw 	   &\qw		& \qw	 &\qw		&\qw		&\qw	&\qw	& \qw	&\qw	&\qw &\qw & \qw &\ctrl{-1}		&\qw &\targ & \meter	&	\cw		& \cw &~~~~~~	t_j^n\\
\lstick{C_j:~X^{a_j}Z(\theta_j^j)\big[\mathcal{C}_j\big] } 	~~~~			&\qw   			 &\qw 	   &\qw		& \qw	 &\qw		&\qw		&\qw	&\qw	& \qw	&\qw	&\qw & \qw & \qw	& \qw	&\qw	& \ctrl{-1}	&\qw & \qw & & &\hspace{0.7in}X^{a_j}Z(\theta_j)\big[\mathcal{C}_j\big] \\\\
					 }		
	}
	\caption{Remote State Preparation with quantum input (Protocol \ref{Algo2}). Client $C_j$ performs a one-time pad on his register $\mathcal{C}_j$ and the result of the circuit remains one-time padded, where $\theta_j=\theta_j^j+\sum_{k=1,k\neq j}^n (-1)^{\bigoplus_{i=k}^n t_j^i+a_j}\theta_j^k$.}\label{fig:algo1}
\end{figure}
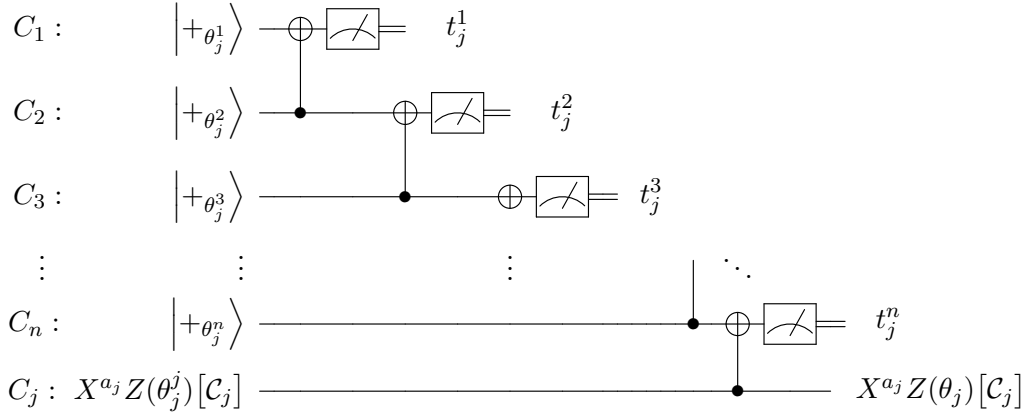

\vspace{0.3in}
\begin{algorithm}[H]
\floatname{algorithm}{Protocol}
\caption{(State preparation for $j\in O^c\setminus I$)}
\label{Algo3}
Server stores states received from clients $C_k$ to distinct registers $\mathcal{S}_k\subset \mathcal{S}$ ($k=1,\dots,n$);

for $k=1,\dots,n-1$

\hspace{0.5in}CNOT on $\mathcal{S}_k\otimes\mathcal{S}_{k+1}$;

\hspace{0.5in}measure state in $\mathcal{S}_k$ and get outcome $t_j^k$;

end;

\end{algorithm} 
\vspace{0.2in}

\begin{figure}[H]
\centerline{
\Qcircuit @C=0.5em @R=1.5em {
\lstick{C_1:~~~~\ket{+_{\theta_j^1}}}	&\qw  &\targ &\meter	&	\cw 	& t_j^1\\	
\lstick{C_2:~~~~\ket{+_{\theta_j^2}}}  	&\qw		&\ctrl{-1}&\qw	&\targ		 &	\meter 	&	\cw		&	t_j^2\\
\lstick{C_3:~~~~\ket{+_{\theta_j^3}}} 	&\qw		&\qw	  &\qw	  	&\ctrl{-1}		&\qw &\targ & \meter	&	\cw		& \cw	&~~~~  t_j^3\\
\lstick{\vdots \hspace{0.7in}\vdots}  \hspace{0.5in}				&				 &		   &		&		 &			&	\vdots	&		&	& & & & 	&  &  & & \ddots \\
\lstick{C_{n-1}:\hspace{0.1in}\ket{+_{\theta_j^{n-1}}}}				&\qw   			 &\qw 	   &\qw		& \qw	 &\qw		&\qw		&\qw	&\qw	& \qw	&\qw	&\qw &\qw & \qw &\ctrl{-1}		&\qw &\targ & \meter	&	\cw		& \cw & ~~~~~	t_j^{n-1}\\
\lstick{C_n:~~~~\ket{+_{\theta_j^n}}}				&\qw   			 &\qw 	   &\qw		& \qw	 &\qw		&\qw		&\qw	&\qw	& \qw	&\qw	&\qw & \qw & \qw	& \qw	&\qw	& \ctrl{-1}	&\qw & \qw & & &\hspace{0.3in}\mathbf{\ket{+_{\theta_j}}}\\
					 }		
	}
	\caption{Remote State Preparation without quantum input (Protocol \ref{Algo3}), where $\theta_j=\theta_j^n+\sum_{k=1}^{n-1} (-1)^{\bigoplus_{i=k}^{n-1} t_j^i}\theta_j^k$.}\label{fig:algo2}
\end{figure}
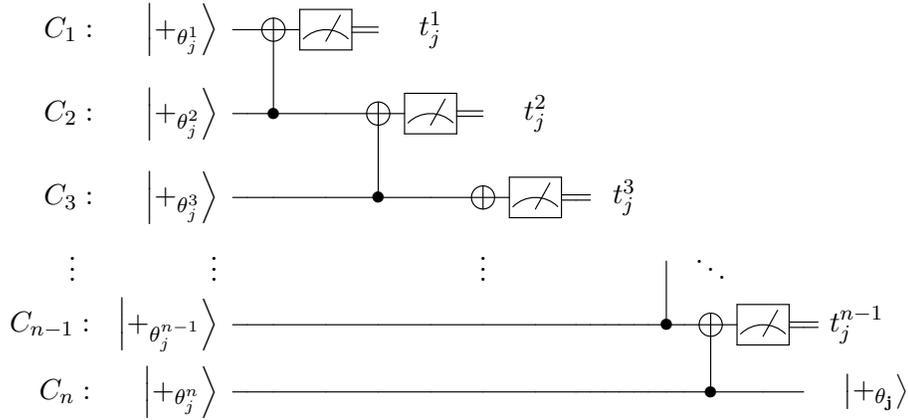

\begin{algorithm}[!h]
\caption{Multiparty Quantum Computing Protocol}
\label{protocol}
\begin{itemize}
\item A quantum input $\rho_{in}$ and measurement angles $\{\phi_j\}_{j=1}^q$ for qubits $j\in O^c$ .
\end{itemize}

\underline{\emph{Preparation phase}}
\begin{description}
\item[quantum input:] For $j\in I$ 
\begin{enumerate}
\item Client $C_j$ applies a one-time pad $X^{a_j}Z(\theta_j^j)$ to his qubit, where $a_j\in_R\{0,1\}$ and $\theta_j^j\in_R\{l\pi/4\}_{l=0}^7$ and sends it to the Server. He secret-shares the values $a_j$ and $\theta_j^j$ with the other clients.
\item Each client $C_k (k\neq j)$, runs Protocol \ref{Algo1} with the Server. If all clients pass the test, the Server at the end has $n-1$ states $\ket{+_{\theta_j^k}}=\frac{1}{\sqrt{2}}\big(\ket{0}+e^{i\theta_j^k}\ket{1}  \big)$ for $k\neq j$.
\item The Server runs Protocol \ref{Algo2} and announces outcome vector $\mathbf{t}_j$. \\
\end{enumerate}\vspace{-0.1in}
At this point the Server has the state  $\rho'_{in}=\big(X^{a_1}Z(\theta_1)\otimes \dots \otimes X^{a_n} Z(\theta_n)\otimes \mathbf{1}_{\mathcal{R}}\big)\cdot \rho_{in}$, where 
 \begin{equation}\label{eq:entangle1}
 \theta_j=\theta_j^j+\sum_{k=1, k\neq j}^n (-1)^{\bigoplus_{i=k}^n t_j^i+a_j}\theta_j^k
\end{equation}
\item[non-output / non-input qubits:] For $j\in O^c\setminus I$ 
\begin{enumerate}
\item[4.] All clients $C_k$, $k\in[n]$  run Protocol \ref{Algo1} with the Server. If all clients pass the test, the Server at the end has $n$ states $\ket{+_{\theta_j^k}}=\frac{1}{\sqrt{2}}\big(\ket{0}+e^{i\theta_j^k}\ket{1}  \big)$ for $k=1,\dots,n$.
\item[5.] The Server runs Protocol \ref{Algo3} getting outcome vector $\mathbf{t}_j$. He ends up with the state $\ket{+_{\theta_j}}$, where:
\begin{equation}\label{eq:entangle2}
\theta_j=\theta_j^n+\sum_{k=1}^{n-1} (-1)^{\bigoplus_{i=k}^{n-1} t_j^i}\theta_j^k
\end{equation}
\end{enumerate}
\item[output qubits:] For $j\in O$, the Server prepares $\ket{+}$ states.
\item[graph state:] The Server entangles the $n+q$ qubits to a brickwork state by applying ctrl-$Z$ gates.

\end{description}

\begin{flushleft}
\underline{\emph{Computation phase}}
\vspace{-7pt}
\end{flushleft}

\begin{description}
\item[non-output qubits:] For $j\in O^c$

\begin{enumerate}

\item All clients $C_k$, $k=1,\dots,n$ choose random $r_j^k\in\{0,1\}$, which they secret-share with the other clients. Then using a computation oracle, they compute the measurement angle of qubit $j$:
\begin{equation}\label{angle}
\delta_j:=\phi'_j+\pi r_j+\theta_j
\end{equation}
where undefined values are equal to zero, or otherwise:
\begin{itemize}
\item $\phi'_j=(-1)^{a_j+s_j^X}\phi_j+s^Z_j\pi+a_{f^{-1}(j)}\pi$.
\item $r_j=\bigoplus\limits_{k=1}^n r_j^k$.
\item $s_i=b_i\oplus r_i$, for $i\leq j$.
\end{itemize}

\item The Server receives $\delta_j$ and measures qubit $j$ in basis $\{\ket{+_{\delta_j}},\ket{-_{\delta_j}}\}$, getting result $b_j$. He announces $b_j$ to the clients.
\end{enumerate}
\item[output qubits:] For $j\in O$, the Server sends the ``encrypted'' quantum state to client $C_{j-q}$. All participants jointly compute $s_j^X$ and $s_j^Z$ and send it to client $C_{j-q}$, who applies operation $Z^{s_j^Z}X^{s_j^X}$ to retrieve the actual quantum output. 

\end{description}
\end{algorithm}

\clearpage

\subsection{Analysis of the Protocol}
\subsubsection{Correctness}

\begin{theorem}\label{thm:correctness}
Protocol \ref{protocol} emulates the filtered Ideal Resource $\mathcal{M}_\perp$ of Figure \ref{fig:MPQC}.
\end{theorem}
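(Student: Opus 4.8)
The plan is to establish the case~(a) instance of Definition~\ref{security}, namely inequality~(\ref{correct}), by showing that an honest run of Protocol~\ref{protocol} produces exactly the ideal output $\rho_{out}=(U\otimes\mathbf{1}_{\mathcal{R}})\cdot\rho_{in}$, so that in fact $d(\pi\mathcal{O},\mathcal{M}_\perp)=0$. Since all parties are honest, the VSS commitments and the computation oracle simply return the intended classical values, and the tests inside Protocol~\ref{Algo1} always pass because honest clients send correctly rotated states; the argument therefore reduces to a deterministic verification that the composed quantum operations implement the intended MBQC computation. The high-level strategy is to reduce the multiparty execution to the correctness of single-client Universal Blind Quantum Computation in the flow-based formalism of \cite{FK13,DK06}: I would argue that the aggregate encryption contributed by all clients, i.e. the total rotation $\theta_j$ and the total mask $r_j=\bigoplus_k r_j^k$, plays exactly the role that a single client's secret angle and random bit play in UBQC.

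The first and most delicate step is to verify that Remote State Preparation produces the claimed encrypted states, i.e. to establish Equations~(\ref{eq:entangle1}) and~(\ref{eq:entangle2}). Here I would analyse the elementary gadget underlying Protocols~\ref{Algo2} and~\ref{Algo3}: given two qubits $\ket{+_{\alpha}}$ and $\ket{+_{\beta}}$, applying a CNOT and measuring one of them in the computational basis leaves the other in a state of the form $\ket{+_{\alpha\pm\beta}}$, with the sign fixed by the measurement outcome. Chaining this gadget across the $n$ registers as prescribed, and tracking how each outcome $t_j^k$ flips the sign of the remaining accumulation, should yield precisely the signed sums $\sum_k (-1)^{\bigoplus_i t_j^i}\theta_j^k$ appearing in~(\ref{eq:entangle1}) and~(\ref{eq:entangle2}); for $j\in I$ the extra one-time pad $X^{a_j}Z(\theta_j^j)$ applied by client $C_j$ commutes through the gadget (the $X^{a_j}$ producing exactly the $+a_j$ term in the exponent) to give the stated $\rho'_{in}=\big(\bigotimes_k X^{a_k}Z(\theta_k)\otimes\mathbf{1}_{\mathcal{R}}\big)\cdot\rho_{in}$. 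This bookkeeping over the sequential CNOT-and-measure structure is where the main effort lies.

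Granting the preparation step, the next step is the computation phase. After the Server entangles the $n+q$ qubits into the brickwork state, I would show that for each $j\in O^c$ the angle $\delta_j=\phi'_j+\pi r_j+\theta_j$ of~(\ref{angle}) is chosen so that measuring qubit $j$ in the basis $\{\ket{+_{\delta_j}},\ket{-_{\delta_j}}\}$ is equivalent, on the encrypted state, to the ideal MBQC measurement at angle $\phi_j$ on the unencrypted state, up to Pauli corrections tracked classically. The summand $\theta_j$ cancels the preparation rotation, the summand $\pi r_j$ implements the standard UBQC blinding whose effect is undone by recording $s_j=b_j\oplus r_j$, and the definition $\phi'_j=(-1)^{a_j+s_j^X}\phi_j+s_j^Z\pi+a_{f^{-1}(j)}\pi$ supplies exactly the X- and Z-flow corrections associated with the sets $S_j^X,S_j^Z$, together with the correction induced by the input one-time pad $a_{f^{-1}(j)}$. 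This is the point where I invoke correctness of the underlying pattern with flow $f$: the propagated Pauli corrections commute through the ctrl-$Z$ entanglement and the successive measurements in the usual way, so the computation realised on the encrypted register agrees with $U$ acting on the unencrypted one.

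Finally, I would treat the output phase: by the same correction-propagation, each output qubit $j\in O$ reaches client $C_{j-q}$ in the encrypted form $X^{s_j^X}Z^{s_j^Z}$ acting on the ideal output, so that applying $Z^{s_j^Z}X^{s_j^X}$ recovers it exactly. Composing the three phases yields $\rho_{out}=(U\otimes\mathbf{1}_{\mathcal{R}})\cdot\rho_{in}$, which is by definition the output of $\mathcal{M}_\perp$ in the honest setting, giving $d(\pi\mathcal{O},\mathcal{M}_\perp)=0\le\epsilon$. I anticipate that the remote-state-preparation sign bookkeeping of the second step, rather than the essentially standard flow argument of the third, will be the principal obstacle.
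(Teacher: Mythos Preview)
Your proposal is correct and follows essentially the same approach as the paper: first establish the Remote State Preparation formulas~(\ref{eq:entangle1}) and~(\ref{eq:entangle2}) via the CNOT-and-measure gadget, then invoke the flow/UBQC correctness for the computation phase and the final Pauli decryption, concluding $d(\pi\mathcal{O},\mathcal{M}_\perp)=0$. The paper carries out your ``chaining'' argument as an explicit induction on the number of clients (base case two qubits, then the step from $n-1$ to $n$), and treats Eq.~(\ref{eq:entangle1}) by splitting into the cases $j\neq n$ and $j=n$ when adjoining the one-time-padded input qubit to the already-accumulated $\ket{+_{\tilde\theta_j}}$; your sketch of that bookkeeping matches exactly what is done there.
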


The vailidity of the above theorem comes directly from the correctness of the individual circuits implementing Protocols \ref{Algo2} and \ref{Algo3}, as well the propagation of $Z$ and $X$ corrections through the flow of the computation. A detailed proof is given in the Appendix that shows that:
\[d(\pi \mathcal{O}, \mathcal{M}_\perp)=0\]
therefore a distinguisher cannot tell the difference between the real communication protocol and an interaction with the Ideal MPQC Resource when all participants are honest.

\subsubsection{Malicious Server}
The proof of security against a malicious Server that is allowed to deviate from the protocol, by applying operations on the data he receives, is based on quantum teleportation. As observed in \cite{DFPR14}, for all the quantum states sent to the Server in Protocol \ref{protocol}, there exists an equivalent circuit that uses an EPR pair and teleports via measurement the quantum state to one of the entangled qubits of the EPR, which is then sent to the Server. What is more important, is that the operation on the half of the EPR pair, can happen at the Server side, before the ``teleportation'' of information from the clients. We can therefore build a simulator for the Server $\sigma_S$, who runs Protocol \ref{prot:sim}. We can also define an MPQC resource that  runs Protocol \ref{prot:mpqc} when it interacts with the simulator on behalf of the Server.

\begin{algorithm}[!h]
\caption{Simulator for Server}
\label{prot:sim}
\begin{description}
\item[non-output qubits:] For $j\in O^c$
\begin{enumerate}
\item $\sigma_S$ creates an EPR pair and sends one half of it to the Server.
\item $\sigma_S$ runs Protocol \ref{Algo1} on behalf of the clients $C_k,~k\neq j$ when $j\in I$ and of the clients $C_k,~k\in[n-1]$ for $j\in O^c\setminus I$ sending to the Server half EPR pairs and always accepting.
\item $\sigma_S$ receives vector $\mathbf{t}_j$. 
\item $\sigma_S$ sends $\delta_j\in_R\{l\pi/4\}_{l=1}^7$ to the Server and receives a reply $b_j\in\{0,1\}$.
\end{enumerate}
\item[output qubits:] For $j\in O$
\begin{enumerate}
\item $\sigma_S$ receives $n$ qubits from the Server.
\item $\sigma_S$ sends the other halfs of the EPR pairs, the received quantum states, as well as $\delta_j$, $b_j$ and $\mathbf{t}_j$ for $j=1,\dots,q$, to the MPQC resource.
\end{enumerate}
\end{description}
\end{algorithm}

\begin{algorithm}[!h]
\caption{MPQC resource}
\label{prot:mpqc}
\begin{enumerate}
\item The resource receives the $n$ qubits of $\rho_{in}$ from the clients and all the information from $\sigma_S$.
\item For $j\in I$: the resource performs a CNOT on the corresponding EPR half with the input qubit as control and measures the EPR half in the computational basis, getting result $a_j$. It chooses random measurement angles $\hat{\theta}_j^k$ for the qubits coming from clients $C_k,k\neq j$, sets:
\begin{equation*}
\hat{\theta}_j^j:=\delta_j-\phi'_j-\sum_{k=1, k\neq j}^n (-1)^{\bigoplus_{i=k}^n t_j^i+a_j}\hat{\theta}_j^k
\end{equation*} 
and measures the corresponding qubits. For $j\in O^c\setminus I$, it chooses random measurement angles $\hat{\theta}_j^k$ for clients $C_k,k\in[n-1]$,sets:
\begin{equation*}
\hat{\theta}_j^n:=\delta_j-\phi'_j-\sum_{k=1}^{n-1} (-1)^{\bigoplus_{i=k}^{n-1} t_j^i}\hat{\theta}_j^k
\end{equation*}
and measures the corresponding qubits. In the computation of the angles, undefined values are equal to zero, and:
\begin{itemize}
\item $\phi'_j=(-1)^{a_j+s_j^X}\phi_j+s^Z_j\pi+a_{f^{-1}(j)}\pi$.
\item $r_j=\bigoplus\limits_{k=1}^n r_j^k$.
\item $s_i=b_i\oplus r_i$, for $i\leq j$.
\end{itemize}
\item For $j\in O$: the resource performs corrections $Z^{s_j^Z}X^{s_j^X}$ on the remaining qubits.
\end{enumerate}
\end{algorithm}

\begin{theorem}
Protocol \ref{protocol} is secure against a malicious Server.
\end{theorem}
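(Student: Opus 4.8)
The goal is to establish inequality (\ref{secure1}), namely $d(\pi_1\dots\pi_n\mathcal{O},\sigma_S \mathcal{M})\leq \epsilon$, for the simulator $\sigma_S$ of Protocol \ref{prot:sim} and the resource behaviour of Protocol \ref{prot:mpqc}. The plan is to show that, for every input $\rho_{in}$, every choice of angles $\{\phi_j\}$, and every (possibly deviating) strategy of the Server, the joint state held by the Server together with the honest clients' outputs is identical in the real and the ideal experiments, so that in the ideal-oracle model one in fact gets $d=0$, with the residual $\epsilon$ inherited only from the classical computation oracle. Since against a malicious Server all clients are honest, the tests of Protocol \ref{Algo1} are always passed and the simulator can ``always accept''; the substance of the argument is therefore that the Server's view reveals nothing about $\rho_{in}$ or $\{\phi_j\}$ while the clients still recover the correct output.

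First I would make precise the teleportation rewriting observed in \cite{DFPR14}. For each qubit sent to the Server in the preparation phase --- the one-time-padded input $X^{a_j}Z(\theta_j^j)[\mathcal{C}_j]$ for $j\in I$ and the rotated states $\ket{+_{\theta_j^k}}$ for the operational qubits --- I would exhibit an equivalent circuit in which the client instead shares half of an EPR pair with the Server and performs the state preparation by a local measurement on its own half, teleporting the intended state into the Server's register up to a Pauli correction. The essential point is that any completely positive map the Server applies to the received half commutes with this teleportation measurement, so the Server's operation may be pulled to a time \emph{before} the client's measurement. This justifies replacing the honest preparation by ``send an EPR half now, fix the encoding later'', which is exactly what $\sigma_S$ does in steps 1--2 of Protocol \ref{prot:sim}.

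Next I would show that the classical transcript seen by the Server is distributed identically in the two worlds. A quantum one-time pad with a uniformly random key maps any state to $\mathbb{I}$, so the encrypted input and the rotated states are maximally mixed and indistinguishable from the EPR halves sent by $\sigma_S$; and the measurement angle $\delta_j=\phi'_j+\pi r_j+\theta_j$ of Equation (\ref{angle}) is uniform over $\{l\pi/4\}$ from the Server's viewpoint, because $\theta_j$ is the combination (Equations (\ref{eq:entangle1})--(\ref{eq:entangle2})) of the clients' secret rotations that the Server never learns, while $\pi r_j$ randomises the reported outcome $b_j$. Hence sending $\delta_j$ uniformly at random, as $\sigma_S$ does in step 4, perfectly matches the real distribution. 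For correctness of the clients' outputs I would then check that the resource of Protocol \ref{prot:mpqc} inverts the teleportation: its choice $\hat{\theta}_j^j:=\delta_j-\phi'_j-\sum_{k=1,k\neq j}^n (-1)^{\bigoplus_{i=k}^n t_j^i+a_j}\hat{\theta}_j^k$ (and the analogous $\hat{\theta}_j^n$) is precisely the value that makes the teleported-and-measured state equal to the state that would have been measured at angle $\delta_j$ in the real run, so that after the flow corrections $Z^{s_j^Z}X^{s_j^X}$ the resource outputs $(U\otimes\mathbf{1}_{\mathcal{R}})\cdot\rho_{in}$ on the true input, in agreement with Theorem \ref{thm:correctness}.

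The main obstacle I expect is the bookkeeping in this last step: I must verify that the angle-matching performed by the resource reproduces the real output not merely marginally but \emph{jointly} with the Server's arbitrary deviation, tracking how the one-time-pad keys $a_j,\theta_j$ and the dependency-set corrections $s_j^X,s_j^Z$ propagate through the flow of the brickwork state. Concretely, the delicate claim is that the deferred Server map acts on the ideal side exactly as it did in the real protocol, so that the resource and the real protocol apply the same operation to the same effectively-encrypted register. Establishing this carefully, layer by layer through the partial order of the flow, is where the real work lies, whereas the hiding argument and the EPR equivalence are by comparison routine.
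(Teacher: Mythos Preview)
Your proposal is correct and follows essentially the same approach as the paper: the EPR/teleportation rewriting of the client-to-Server transmissions, the uniformity of $\delta_j$ from the Server's view via the hidden $\theta_j$ and $r_j$, and the angle-matching in Protocol~\ref{prot:mpqc} that restores correctness. The only presentational difference is that the paper discharges what you call the ``delicate bookkeeping'' not layer by layer through the flow, but via a short chain of explicit hybrid protocols in the Appendix (Protocols~\ref{app:protocol1}--\ref{app:protocol3}) that successively replace direct state preparation by EPR-based teleportation, then delay the clients' measurements past the Server's actions, and finally swap the order in which $\delta_j$ and $\hat{\theta}_j^k$ are sampled; each hop is a local circuit identity or a commutation with the Server's map, so the equivalence is established without tracking the full dependency structure of the brickwork state.
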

\begin{proof}
In order to show that the protocol is secure against a malicious Server, we will argue that Eq.(\ref{secure1}) holds for the simulator and the MPQC resource presented below. The proof relies heavily on teleportation techniques and more specifically in the equivalence of circuits and delay of operations on different wires. We want to be sure that Protocol \ref{protocol} and the MPQC resource in Protocol \ref{prot:mpqc} implement the same map, in other words that the outputs are indistinguishable. The equivalence comes from the fact that the one-time pad on the clients' input (in Protocol \ref{protocol}) can be rewritten as a ``delayed'' teleportation of the input to the EPR halfs that the Server acts upon. Since the simulator does not have access to the quantum input or the measurement angles, it then sends the other half of the EPR pair back to the Ideal Resource, so that it can get entangled with the quantum input and measured in the correct basis.

Similarly for the rotated $\ket{+_{\theta_j}}$ states, since the choice of the rotations is done uniformly at random, but defines the measurement, we can again think of an equivalent ``delayed'' teleportation, where the measurement happens at random, and then the rotation is chosen. The random values $r_j$ that are chosen at random in the protocol, also follow the uniform distribution in the measurements done by the MPQC resource, due to the properties of the entanglement in the teleportation process.

Instead of choosing $\theta_j$, $a_j$ and $r_j$ uniformly at random from their respective domains at the different steps of the protocol, the simulator chooses $\delta_j$ uniformly at random and the two bits $a_j$ and $r_j$ get defined from the randomness inherent when measuring the EPR pairs. Therefore, the random variables that are chosen during the two protocols are equivalent, since in Protocol \ref{protocol} the choice of the variables define $\delta_j$ while in Protocol \ref{prot:sim}, the simulator chooses $\delta_j$ uniformly at random and then the value of $\theta'_j$ is defined. However, in the simulated protocol, the Server has at no point access to the quantum input $\rho_{in}$ or the measurement angles, therefore the security of the simulated protocol and by consequence the security of the Real protocol is ensured (see Appendix for more details).

\end{proof}

We have proven that a Server does not learn anything about the inputs of the clients, since the protocol emulates an ideal MPQC resource that does not provide any access to the clients' input, to the Server at any point. From this result, we prove at the same time that the computation is done in a blind way, meaning that the Server does not know what computation he is performing, as long as the measurement pattern (i.e. the angles $\phi_j$) remains hidden from him. The proof of blindness follows directly from the equivalence of protocols, since again the Server has no access to the measurement angles at any point, when interacting with the ideal MPQC resource.

\begin{corollary}
The protocol is blind against a malicious Server.
\end{corollary}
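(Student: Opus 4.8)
The plan is to obtain blindness as an immediate consequence of the security statement against a malicious Server, rather than constructing a fresh simulator. The key observation is that blindness is already encoded in the Ideal Resource of Definition~\ref{def:Ideal}: the resource $\mathcal{M}$ leaks to the Server only the size $q$ of the computation, and in particular nothing about the measurement angles $\{\phi_j\}_{j=1}^q$. Since Eq.(\ref{secure1}) establishes that the real protocol $\epsilon$-constructs $\sigma_S\mathcal{M}$, whatever the real Server can learn about $\{\phi_j\}$ is bounded by what he could learn by interacting with $\sigma_S\mathcal{M}$, up to the distinguishing advantage $\epsilon$.

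First I would make precise that the simulator $\sigma_S$ of Protocol~\ref{prot:sim} is \emph{independent of the measurement angles}. Inspecting its description, every message that $\sigma_S$ sends to the Server---the halves of the EPR pairs, the state-preparation rounds of Protocol~\ref{Algo1}, and most importantly the measurement instructions $\delta_j$, which are drawn uniformly at random from $\{l\pi/4\}_{l=1}^7$---is sampled without any reference to $\{\phi_j\}$. The angles enter only inside the resource $\mathcal{M}$ (Protocol~\ref{prot:mpqc}), where they are used to compute the clients' output through $\phi'_j$; this information is routed to the clients' interfaces and never reaches the Server. Consequently the marginal distribution of the Server's view in the ideal world $\sigma_S\mathcal{M}$ is \emph{identical} for every choice of $\{\phi_j\}$ compatible with a fixed $q$.

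Then I would conclude by a triangle-inequality argument. Fix two measurement patterns $\{\phi_j\}_{j=1}^q$ and $\{\tilde\phi_j\}_{j=1}^q$ of the same size, and let $\pi\mathcal{O}[\{\phi_j\}]$ and $\pi\mathcal{O}[\{\tilde\phi_j\}]$ denote the corresponding real protocols as seen from the Server. By Eq.(\ref{secure1}) each is $\epsilon$-close to $\sigma_S\mathcal{M}$ with the respective angles; but by the previous step these two ideal systems induce the \emph{same} Server view. Hence
\[
d\big(\pi\mathcal{O}[\{\phi_j\}],\,\pi\mathcal{O}[\{\tilde\phi_j\}]\big)\leq d\big(\pi\mathcal{O}[\{\phi_j\}],\sigma_S\mathcal{M}\big)+d\big(\sigma_S\mathcal{M},\pi\mathcal{O}[\{\tilde\phi_j\}]\big)\leq 2\epsilon,
\]
so no Server strategy can distinguish which of the two computations is being delegated, beyond advantage $2\epsilon$. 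This is exactly the blindness guarantee.

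The main obstacle---and the only place where care is needed---is the bookkeeping that guarantees the Server's view in $\sigma_S\mathcal{M}$ genuinely carries no dependence on $\{\phi_j\}$. One must check that none of the values the Server receives back (the vector $\mathbf{t}_j$ and the bit $b_j$) is correlated with the angles through the resource: here $\mathbf{t}_j$ and $b_j$ originate from the Server's own operations together with the uniformly random $\delta_j$, and the uniform one-time padding of the rotations---the same fact used in the security proof, by which $\theta_j$ and $r_j$ are uniform---ensures the $\delta_j$ reveal nothing about $\phi_j$. Once this independence is established, the corollary is immediate and requires no new simulation.
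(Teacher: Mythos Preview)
Your proposal is correct and takes essentially the same approach as the paper: blindness is derived as an immediate consequence of Eq.(\ref{secure1}) together with the observation that the simulator $\sigma_S$ and the ideal resource $\mathcal{M}$ never expose the measurement angles $\{\phi_j\}$ to the Server. The paper states this in one sentence (``the Server has no access to the measurement angles at any point, when interacting with the ideal MPQC resource''), while you spell out the independence of $\sigma_S$ from $\{\phi_j\}$ and add an explicit triangle-inequality step; this elaboration is sound but not required beyond what the paper sketches.
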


\subsubsection{Malicious Clients}

\begin{theorem}The protocol is secure against a coalition of malicious clients.
\end{theorem}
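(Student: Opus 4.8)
The goal is to establish inequality~(\ref{secure2}), namely to exhibit a simulator $\sigma_{\mathcal{D}}$ such that the real experiment $\pi_{\mathcal{H}}\mathcal{O}\pi_S$, in which the honest clients $\mathcal{H}$ and the honest Server run their parts of Protocol~\ref{protocol} while the coalition $\mathcal{D}$ deviates arbitrarily, is $\epsilon$-indistinguishable from $\sigma_{\mathcal{D}}\mathcal{M}_\perp$. The plan is to build $\sigma_{\mathcal{D}}$ so that it plays the roles of \emph{both} the honest clients and the honest Server towards the coalition $\mathcal{D}$ (it may do so because, by assumption, the Server does not collaborate with the clients, so from the coalition's viewpoint the honest parties and the oracle form a single internally-simulated system), while on its other interface it extracts a well-defined effective input for $\mathcal{D}$ and feeds it to the filtered ideal resource $\mathcal{M}_\perp$. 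Since $\mathcal{M}_\perp$ blocks any Server deviation and lets each client freely choose an input, the whole proof reduces to showing that (i) $\sigma_{\mathcal{D}}$ can extract exactly one such input from the coalition, and (ii) the coalition's view is independent of the honest clients' inputs.

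For the extraction I would proceed step by step along Protocol~\ref{protocol}. First, $\sigma_{\mathcal{D}}$ runs the preparation phase honestly on behalf of $\mathcal{H}$ and the Server, and receives from $\mathcal{D}$ both their quantum registers and their VSS shares. Because all classical choices ($a_j$, $\theta_j^j$, $\theta_j^k$, $r_j^k$) are committed through a verifiable secret-sharing scheme and every classical value is produced by the computation oracle, $\sigma_{\mathcal{D}}$ --- simulating the oracle and, together with $\mathcal{H}$, holding an honest majority of shares --- can reconstruct the coalition's committed values and is guaranteed that these same values are the ones used at every later step. Next, $\sigma_{\mathcal{D}}$ executes the cut-and-choose test of Protocol~\ref{Algo1}: it requests $m-1$ of the $m$ copies, measures in the reconstructed bases, and aborts exactly when a dishonest client fails, as the real Server would. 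Conditioned on passing, the surviving qubit is (up to the testing error) the correctly rotated state $\ket{+_{\theta_j^k}}$ dictated by the committed angle, so $\sigma_{\mathcal{D}}$ can strip the committed one-time pad from each input qubit of $\mathcal{D}$ and hand the resulting decrypted qubits to $\mathcal{M}_\perp$ as the coalition's effective input; the honest outcomes $\mathbf{t}_j$ and $b_j$ it simply forwards.

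The secrecy half is comparatively direct and I would phrase it as a one-time-pad argument. Each honest client's input qubit is masked by a fresh $X^{a_j}Z(\theta_j^j)$ with $a_j$, $\theta_j^j$ uniformly random and secret-shared through VSS, and each measurement angle $\delta_j$ that the coalition ever sees is blinded by the honest contributions $r_j^k$ and $\theta_j^k$. Under the honest-majority assumption, a coalition below the reconstruction threshold learns nothing about these masks from its shares, so the honest input qubits appear maximally mixed and the announced angles appear uniform and independent of the honest data. Consequently the coalition's entire view can be reproduced by $\sigma_{\mathcal{D}}$ without ever touching the honest inputs, which is precisely what $\mathcal{M}_\perp$ provides.

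The main obstacle, and where the $\epsilon$ originates, is the quantitative analysis of the cut-and-choose test together with the extraction. A malformed or entangled state sent by $\mathcal{D}$ can survive the test with small probability, and since the tested copies may be entangled with the coalition's private register, one must bound the trace distance between the surviving (untested) qubit and the ideal committed state rather than reason about product states; this bound, decreasing in $m$, is exactly the soundness error that caps the distinguishing advantage. The second delicate point is that the extracted effective input must be well-defined even when $\mathcal{D}$ submits arbitrary, possibly entangled quantum registers --- it is precisely the VSS commitment to the classical rotations that pins the decryption map down and makes the extraction deterministic. I would therefore carry out the trace-distance estimate conditioned on the committed values and then average, accumulating the per-qubit testing errors into the final $\epsilon$.
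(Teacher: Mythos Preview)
Your proposal is essentially the same strategy as the paper's: build a simulator that (a) uses the VSS commitments and the cut-and-choose test to pin down the coalition's classical randomness, (b) undoes the declared one-time pad on their input qubit(s) and hands the result to $\mathcal{M}_\perp$, and (c) argues that the coalition's transcript is independent of the honest inputs because every announced value is masked by fresh honest randomness. The paper's proof is in fact much terser than yours --- it simply writes down the simulator (Protocol~\ref{prot:sim2}) and asserts that Equation~\eqref{secure2} holds because the coalition never receives quantum information from honest clients and only interacts with the oracle.

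Two small remarks. First, your phrase ``runs the preparation phase honestly on behalf of $\mathcal{H}$'' is slightly off: the simulator does not have the honest inputs, so it cannot literally run their part. The paper's simulator resolves this by not running the Server's computation at all --- it simply samples $\delta_j$ and $b_j$ uniformly at random and later re-encrypts the ideal output consistently with those choices; this is exactly what your one-time-pad argument licenses, so your understanding is right even if the wording is loose. Second, your discussion of where the $\epsilon$ comes from --- the soundness of Protocol~\ref{Algo1} against possibly entangled test copies --- is more careful than the paper, which leaves that error implicit.
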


\begin{proof}
Protocol \ref{prot:sim2} presents a simulator $\sigma_C$ that receives communication from a malicious coalition of clients on its external interface. For ease of use we will consider one malicious client $C_c$ with one input qubit, but this can easily be extended by thinking of all malicious clients as one client that has multiple input qubits. It is straightforward to see that Equation \eqref{secure2} holds, since the malicious clients never receive quantum information from the other clients, and the only information they share is the one used in the computation oracles that are implemented using secure classical multiparty computation protocols. The quantum outcome they receive is the correct outcome of an honest Server, encoded by some information ($r_j$) that is chosen by the malicious clients in a previous step.

\begin{algorithm}[!h]
\caption{Simulator for Clients}
\label{prot:sim2}
\begin{enumerate}
\item $\sigma_C$ receives a quantum state from client $C_c$ as well as the secret shares of $a_j$ and $\theta_j^j$.
\item For all other nodes of the brickwork state, $\sigma_C$ runs Protocol \ref{Algo1} with Client $C_c$ and aborts if the secret shares of the classical values do not all match the measurement outcomes of the quantum states.
\item For $j\in O^c$, $\sigma_C$ receives the secret shares of the randomness $r_j^c$, chosen by the Client $C_c$ and interacts according to the communication protocol simulating the oracle of computing $\delta_j$, choosing uniformly at random the value of $\delta_j$. $\sigma_C$ also replies with random $b_j$.
\item $\sigma_C$ undoes rotation $X^{a_c}Z(\theta_c^c)$ on the input qubit of $C_c$, inputs it to the Ideal MPQC resource and gets back the output corresponding to $C_c$.
\item Finally, $\sigma_C$ rotates the output qubit $j\in O$ corresponding to client $C_c$, applying the operation $Z^{s_j^Z}X^{s_j^X}$, and participates in the computation protocol to compute $s_j^X$ and $s_j^Z$ with the previously sent and shared values $b_j$ and $r_j$. 

\end{enumerate}
\end{algorithm}

\end{proof}


\section{Conclusion}
In this work, we have presented a quantum multiparty delegated protocol that provides security for clients with limited quantum abilities, therefore  extending previous results on two-party \cite{DNS10} and multiparty \cite{Benor} computation, with recent work on delegated blind computing \cite{FK13, DFPR14, UBQC}. Our protocol requires no quantum memory for the clients and no need for entangling operations or measurement devices. The only operations that they need to do in the most general case of quantum input and output, is X gates and Z rotations. Our protocol is secure against a dishonest Server, or a coalition of malicious clients. It remains to study whether the proposed protocol remains secure against a dishonest coalition between clients and the Server or if there is an unavoidable leakage of information. One equivalent way of studying this problem would be by extending the results of~\cite{Wallden16} in the multiparty setting, where both the parties and the Server have inputs in the computation. An even more interesting question is whether we can enhance our protocol to include verifiability in a similar way that is done in \cite{FK13}.

The specific protocol presented here is a multiparty version of the delegated blind protocol of \cite{UBQC} and as such, inherits the key advantage of using MBQC over gate teleportation approaches; once the preparation phase is finished (all qubits are sent to the Server and entangled in a graph state), the rest of the communication is classical. However, it can easily be adapted to any blind computing model, for example the measurement-only model \cite{MF13}, since as mentioned in \cite{DFPR14}, all protocols with one-way communication from the Server to a client, are inherently secure due to no-signaling. We have also assumed that the clients choose to act passively malicious, since any active dishonest activity would be detected with very high probability; however, a quantitative proof of security, assuming more extensive attacks from the side of the clients would be an natural extention of this work.

Finally, a similar approach to ours has been explored for two-party computation, that uses recent advances in classical Fully Homomorphic encryption (FHE). In \cite{BJ} and in follow-up work \cite{DSS16} it is shown how to evaluate quantum circuits using quantum FHE, and it would be very interesting to see how they can be adapted in the case of multiple parties and whether the computational and communication requirements are different from our work.   

\vspace{0.4in}

\noindent{\bf Acknowledgements:} This work was supported by grants EP/N003829/1 and EP/M013243/1 from the UK Engineering and Physical Sciences Research Council (EPSRC) and by the Marie Sklodowska-Curie Grant Agreement No. 705194 from the European Union's Horizon 2020 Research and Innovation program. Part of this work was done while AP was visiting LTCI, Paris sponsored by a mobility grant from SICSA. The authors would also like to thank Petros Wallden and Theodoros Kapourniotis for many useful discussions.
\vspace{0.2in}

\clearpage

\section{Appendix}
We first give the complete proof of Theorem \ref{thm:correctness}.

\begin{proof}
We want to prove that Equation \eqref{correct} holds.
First, we will prove that the Equations (\ref{eq:entangle1}) and (\ref{eq:entangle2}) are correct. We will start with Eq.(\ref{eq:entangle2}) for a server that receives $n$ qubits $\ket{+_{\theta_j^k}}$ for $k=1,\dots,n$.

We will show the result by induction to the number of qubits received. For the case of two qubits $\ket{+_{\theta_j^1}}$ and  $\ket{+_{\theta_j^2}}$, the Server performs a CNOT operation on them (with control wire the second one). The resulting state is:
\[
\ket{0}\ket{+_{\theta_j^2+\theta_j^1}}+e^{i\theta_j^1}\ket{1}\ket{+_{\theta_j^2-\theta_j^1}}
\]
When the Server measures the first qubit, he sets outcome bit $t_j^1=0$ when the observed state is $\ket{0}$ and  $t_j^1=1$ when it is $\ket{1}$. Therefore the resulting state is $\ket{+_{\tilde{\theta}_j^2}},$ where:
\[ \tilde{\theta}_j^2=\theta_j^2+(-1)^{t_j^1}\theta_j^1
\]
So Eq.(\ref{eq:entangle2}) holds for $n=2$. Now we will assume that the claim holds for $n-1$ and we will prove it for $n$. After measurement of the qubit $n-2$, the state of qubit $n-1$ is $\ket{+_{\tilde{\theta}_j^{n-1}}}$, where:
\[ \tilde{\theta}_j^{n-1}=\theta_j^{n-1}+\sum_{k=1}^{n-2} (-1)^{\bigoplus_{i=k}^{n-2} t_j^i}\theta_j^k
\]
The Server performs a CNOT on qubits $n-1$ and $n$, resulting in the state:
\[
\ket{0}\ket{+_{\theta_j^n+\tilde{\theta}_j^{n-1}}}+e^{i\tilde{\theta}_j^{n-1}}\ket{1}\ket{+_{\theta_j^n-\tilde{\theta}_j^{n-1}}}
\]
The state after the measurement of the qubit $n-1$ is $\ket{+_{\theta_j}}$, where:
\begin{eqnarray*}
\theta_j&=&\theta_j^n+(-1)^{t_j^{n-1}}\tilde{\theta}_j^{n-1}\\
             &=&\theta_j^n+(-1)^{t_j^{n-1}}\big(\theta_j^{n-1}+\sum_{k=1}^{n-2} (-1)^{\bigoplus_{i=k}^{n-2} t_j^i}\theta_j^k\big ) \\
             &=&\theta_j^n+\sum_{k=1}^{n-1} (-1)^{\bigoplus_{i=k}^{n-1} t_j^i}\theta_j^k
\end{eqnarray*}

\noindent{}We have therefore proven Eq.(\ref{eq:entangle2}). What remains in order to prove Eq.(\ref{eq:entangle1}) is to see what happens when the Server entangles the one-padded quantum input of client $C_j$ with the states $\ket{+_{\theta_j^k}}$ of the rest of the clients $C_k,~ k\neq j$. If the Server follows subprotocol \ref{Algo2}, he first entangles the rotated qubits of the clients $C_k,~ k\neq j$ and measures all but the last, creating a state $\ket{+_{\tilde{\theta}_j}}$. Now we know how to compute $\tilde{\theta}_j$:
\begin{itemize}
\item  For $j\neq n$: $\tilde{\theta}_j=\theta_j^n+\sum_{k=1,k\neq j}^{n-1} (-1)^{\bigoplus_{i=k}^{n-1} t_j^i}\theta_j^k$. 
\item For $j=n$: $\tilde{\theta}_j=\theta_n^{n-1}+\sum_{k=1}^{n-2} (-1)^{\bigoplus_{i=k}^{n-2} t_n^i}\theta_n^k$. 
\end{itemize}

\noindent The last step of Protocol \ref{Algo2} performs a CNOT on $\ket{+_{\tilde{\theta}_j}}$ with control qubit the one-time padded input of client $C_j$ and measures the first in the computational basis. We already have seen how the Z-rotation propagates through the CNOT gate. The $X$ operation of the one-time pad results in a bit flip of the last measurement outcome (either $n$ or $n-1$ according to the two cases above). Therefore, if the one-time pad on register $\mathcal{C}_j$ was $X^{a_j}Z(\theta_j^j)$, after the Remote State Preparation, the register $\mathcal{C}_j$ is still one-time padded with $X^{a_j}Z(\theta_j)$, where:
\begin{itemize}
\item  For $j\neq n$: 
\begin{equation*}
\theta_j=\theta_j^j+(-1)^{t_j^n+a_j}\Big(\theta_j^n+\sum_{k=1,k\neq j}^{n-1} (-1)^{\bigoplus_{i=k}^{n-1} t_j^i}\theta_j^k\Big)=\theta_j^j+\sum_{k=1,k\neq j}^n (-1)^{\bigoplus_{i=k}^n t_j^i+a_j}\theta_j^k
\end{equation*}
\item For $j=n$:
\begin{equation*}
\theta_n=\theta_n^n+(-1)^{t_n^{n-1}+a_n}\Big(\theta_n^{n-1}+\sum_{k=1}^{n-2} (-1)^{\bigoplus_{i=k}^{n-2} t_n^i}\theta_n^k\Big)=\theta_n^n+\sum_{k=1}^{n-1} (-1)^{\bigoplus_{i=k}^{n-1} t_n^i+a_n}\theta_n^k
\end{equation*}
\end{itemize}

From the two cases above, it is obvious that for a general $j=1,\dots,n$, Eq.(\ref{eq:entangle1}) is true, which concludes the correctness of the preparation phase of Protocol \ref{protocol}.

In the Computation phase, for each qubit of the brickwork state, all clients input their data in the classical box and the output is the measurement angle of that qubit. From the properties of MBQC and the flow of the protocol, when the entangled qubits in the Preparation phase are in the $\ket{+}$ state, each angle $\phi_j$ needs to be adjusted to $\phi_j'$ as defined above. However, here each qubit is rotated by all clients by a total angle $\theta_j$. Therefore the measurement angle in the MBQC needs to be adjusted by $\theta_j$. The final difference from ordinary MBQC is the insertion of some joint randomness $r_j$, whose effect is reversed in the consequent steps by adding the randomness to the correction of the function (see \cite{UBQC} for details).

Finally, in the outcome phase, due to previous corrections, the state of qubit $j\in O$ needs to be corrected by client $C_{j-q}$ by applying an operation $Z^{s_j^Z}X^{s_j^X}$, whose classical values can be computed using a computation oracle. We have therefore proven that:
\[d(\pi \mathcal{O}, \mathcal{M}_\perp)=0\]
meaning that a distinguisher $\mathcal{Z}$ cannot tell the difference between the real communication protocol and an interaction with the Ideal MPQC Resource when all participants are honest.

\end{proof}
\vspace{0.4in}

Now, we will present intermediate protocols that prove Equation \eqref{secure1} similarly to the proof technique used in \cite{DFPR14}. We will not include any test of correctness for the clients (Protocol \ref{Algo1} and secret sharing schemes) since the technique used is based on teleportation and delayed measurements and therefore it is not possible for the clients to commit to the correct preparation of the quantum states beforehand. However, this does not affect the proof of security, since these protocols are artificial and used only to show that a malicious Server does not have access at any step to the private data of the clients. We could have included these tests of correctness of the clients, always asking them to accept any measurement outcome of the Server, and therefore showing that they do not provide any further information to the Server. The complete real communication protocol and the simulated one are presented in the main text. Here, we restate the communication protocol, omitting the steps where the clients' honest behavior is checked (Protocol \ref{app:protocol1}) and provide intermediate protocols that are used to prove the equivalence of the real and ideal setting in the case of a malicious Server. This will be done by a step-wise process of proving that each of the presented protocols is equivalent to the others, leading to the final one that uses a simulator for the Server and the Ideal Resource defined in the main text.

\begin{algorithm}[h]
\caption{Multiparty Quantum Computing}
\label{app:protocol1}

\begin{itemize}
\item A quantum input $\rho_{in}$ and measurement angles $\{\phi_j\}_{j=1}^q$ for qubits $j\in O^c$ .
\end{itemize}

\underline{\emph{Preparation phase}}
\begin{description}
\item[quantum input:] For $j\in I$ 
\begin{enumerate}
\item \label{1.1} Client $C_j$ applies a one-time pad $X^{a_j}Z(\theta_j^j)$ to his qubit, where $a_j\in_R\{0,1\}$ and $\theta_j^j\in_R\{l\pi/4\}_{l=0}^7$ and sends it to the Server. 
\item \label{1.2} Each client $C_k~ (k\neq j)$ chooses $\theta_j^k\in_R\{l\pi/4\}_{l=0}^7$ and sends $\ket{+_{\theta_j^k}}=\frac{1}{\sqrt{2}}\big(\ket{0}+e^{i\theta_j^k}\ket{1}  \big)$ to the Server.
\item The Server runs Protocol \ref{Algo2} and announces outcome vector $\mathbf{t}_j$. \\
\end{enumerate}\vspace{-0.1in}
At this point the Server has the state  $\rho'_{in}=\big(X^{a_1}Z(\theta_1)\otimes \dots \otimes X^{a_n} Z(\theta_n)\otimes \mathbf{1}_{\mathcal{R}}\big)\cdot \rho_{in}$, where 
 \begin{equation}\label{appeq:entangle1.1}
 \theta_j=\theta_j^j+\sum_{k=1, k\neq j}^n (-1)^{\bigoplus_{i=k}^n t_j^i+a_j}\theta_j^k
\end{equation}
\item[non-output / non-input qubits:] For $j\in O^c\setminus I$ 
\begin{enumerate}
\item[4.] \label{1.4} All clients $C_k$, $k\in[n]$ choose $\theta_j^k\in_R\{l\pi/4\}_{l=0}^7$ and send $\ket{+_{\theta_j^k}}=\frac{1}{\sqrt{2}}\big(\ket{0}+e^{i\theta_j^k}\ket{1}  \big)$ to the Server.
\item[5.] \label{1.5} The Server runs Protocol \ref{Algo3} getting outcome vector $\mathbf{t}_j$. He ends up with the state $\ket{+_{\theta_j}}$, where:
\begin{equation}\label{appeq:entangle1.2}
\theta_j=\theta_j^n+\sum_{k=1}^{n-1} (-1)^{\bigoplus_{i=k}^{n-1} t_j^i}\theta_j^k
\end{equation}
\end{enumerate}
\item[output qubits:] For $j\in O$, the Server prepares $\ket{+}$ states.
\item[graph state:] The Server entangles the $n+q$ qubits to a brickwork state by applying ctrl-$Z$ gates.

\end{description}

\begin{flushleft}
\underline{\emph{Computation phase}}
\vspace{-7pt}
\end{flushleft}

\begin{description}
\item[non-output qubits:] For $j\in O^c$

\begin{enumerate}

\item All clients $C_k$, $k=1,\dots,n$ choose random $r_j^k\in\{0,1\}$ and using a computation oracle, they compute the measurement angle of qubit $j$:
\begin{equation}\label{angle}
\delta_j:=\phi'_j+\pi r_j+\theta_j
\end{equation}
where undefined values are equal to zero, or otherwise:
\begin{itemize}
\item $\phi'_j=(-1)^{a_j+s_j^X}\phi_j+s^Z_j\pi+a_{f^{-1}(j)}\pi$.
\item $r_j=\bigoplus\limits_{k=1}^n r_j^k$.
\item $s_i=b_i\oplus r_i$, for $i\leq j$.
\end{itemize}

\item The Server receives $\delta_j$ and measures qubit $j$ in basis $\{\ket{+_{\delta_j}},\ket{-_{\delta_j}}\}$, getting result $b_j$. He announces $b_j$ to the clients.
\end{enumerate}
\item[output qubits:] For $j\in O$, the Server sends the ``encrypted'' quantum state to client $C_{j-q}$. All participants jointly compute $s_j^X$ and $s_j^Z$ and send it to client $C_{j-q}$, who applies operation $Z^{s_j^Z}X^{s_j^X}$ to retrieve the actual quantum output. 

\end{description}
\end{algorithm}

\begin{algorithm}[h]
\caption{Multiparty Quantum Computing (equivalent version one)}
\label{app:protocol2}

\begin{itemize}
\item A quantum input $\rho_{in}$ and measurement angles $\{\phi_j\}_{j=1}^q$ for qubits $j\in O^c$ .
\end{itemize}

\underline{\emph{Preparation phase}}
\begin{description}
\item[quantum input:] For $j\in I$ 
\begin{enumerate}
\item \label{2.1} Client $C_j$ creates an EPR pair $\frac{1}{\sqrt{2}}(\ket{00}+\ket{11})$ and sends half to the Server. He then applies a $Z(\hat{\theta}_j^j)$ rotation to his qubit, where $\hat{\theta}_j^j\in_R\{l\pi/4\}_{l=0}^7$, performs a CNOT on the remaining half EPR qubit with control the input qubit, and  measures the input qubit in the Hadamard basis and the half EPR in the computational basis, getting outcomes $r_j^j$ and $a_j$ respectively.
\item \label{2.2} Each client $C_k~ (k\neq j)$ creates an EPR pair $\frac{1}{\sqrt{2}}(\ket{00}+\ket{11})$ and sends half to the Server. He then chooses $\hat{\theta}_j^k\in_R\{l\pi/4\}_{l=0}^7$ and applies a $Z(\hat{\theta}_j^k)$ rotation to the remaining half EPR and then measures it in the Hadamard basis getting outcome $r_j^k$.
\item The Server runs Protocol \ref{Algo2} and announces outcome vector $\mathbf{t}_j$. \\
\end{enumerate}\vspace{-0.1in}
At this point the Server has the state  $\rho'_{in}=\big(X^{a_1}Z(\theta_1)\otimes \dots \otimes X^{a_n} Z(\theta_n)\otimes \mathbf{1}_{\mathcal{R}}\big)\cdot \rho_{in}$, where 
 \begin{equation}\label{appeq:entangle2.1}
 \theta_j=\pi\bigoplus_{k=1}^n r_j^k+\hat{\theta}_j^j+\sum_{k=1, k\neq j}^n (-1)^{\bigoplus_{i=k}^n t_j^i+a_j}\hat{\theta}_j^k
\end{equation}
\item[non-output / non-input qubits:] For $j\in O^c\setminus I$ 
\begin{enumerate}
\item[4.] Each client $C_k$, $k\in[n]$  creates an EPR pair $\frac{1}{\sqrt{2}}(\ket{00}+\ket{11})$ and sends half to the Server. He then chooses $\hat{\theta}_j^k\in_R\{l\pi/4\}_{l=0}^7$ and applies a $Z(\hat{\theta}_j^k)$ rotation to the remaining half EPR followed by a Hadamard, and then measures it in the computational basis getting outcome $r_j^k$.
\item[5.] The Server runs Protocol \ref{Algo3} getting outcome vector $\mathbf{t}_j$. He ends up with the state $\ket{+_{\theta_j}}$, where:
\begin{equation}\label{appeq:entangle2.2}
\theta_j=\pi\bigoplus_{k=1}^n r_j^k+\hat{\theta}_j^n+\sum_{k=1}^{n-1} (-1)^{\bigoplus_{i=k}^{n-1} t_j^i}\hat{\theta}_j^k
\end{equation}
\end{enumerate}
\item[output qubits:] For $j\in O$, the Server prepares $\ket{+}$ states.
\item[graph state:] The Server entangles the $n+q$ qubits to a brickwork state by applying ctrl-$Z$ gates.

\end{description}

\begin{flushleft}
\underline{\emph{Computation phase}}
\vspace{-7pt}
\end{flushleft}

\begin{description}
\item[non-output qubits:] For $j\in O^c$

\begin{enumerate}

\item The clients use a computation oracle to send the measurement angle of qubit $j$ to the Server:
\begin{equation}\label{angle}
\delta_j:=\phi'_j+\pi r_j+\theta_j
\end{equation}
where undefined values are equal to zero, or otherwise:
\begin{itemize}
\item $\phi'_j=(-1)^{a_j+s_j^X}\phi_j+s^Z_j\pi+a_{f^{-1}(j)}\pi$.
\item $r_j=\bigoplus\limits_{k=1}^n r_j^k$.
\item $s_i=b_i\oplus r_i$, for $i\leq j$.
\end{itemize}

\item The Server measures qubit $j$ in basis $\{\ket{+_{\delta_j}},\ket{-_{\delta_j}}\}$ and announces result $b_j$. 
\end{enumerate}
\item[output qubits:] For $j\in O$, the Server sends the ``encrypted'' quantum state to client $C_{j-q}$. All participants jointly compute $s_j^X$ and $s_j^Z$ and send it to client $C_{j-q}$, who applies operation $Z^{s_j^Z}X^{s_j^X}$ to retrieve the actual quantum output. 

\end{description}
\end{algorithm}

\begin{algorithm}[h]
\caption{Multiparty Quantum Computing (equivalent version two)}
\label{app:protocol3}

\begin{itemize}
\item A quantum input $\rho_{in}$ and measurement angles $\{\phi_j\}_{j=1}^q$ for qubits $j\in O^c$ .
\end{itemize}

\underline{\emph{Preparation phase}}
\begin{description}
\item[quantum input:] For $j\in I$ 
\begin{enumerate}
\item Client $C_j$ creates an EPR pair $\frac{1}{\sqrt{2}}(\ket{00}+\ket{11})$ and sends half to the Server. He then performs a CNOT on the remaining half EPR qubit with control the input qubit, and  measures the former in the computational basis, getting outcome $a_j$.
\item Each client $C_k~ (k\neq j)$ creates an EPR pair $\frac{1}{\sqrt{2}}(\ket{00}+\ket{11})$ and sends half to the Server. 
\item The Server runs Protocol \ref{Algo2} and announces outcome vector $\mathbf{t}_j$. \\
\end{enumerate}\vspace{-0.1in}
\item[non-output / non-input qubits:] For $j\in O^c\setminus I$ 
\begin{enumerate}
\item[4.] Each client $C_k$, $k\in[n]$  creates an EPR pair $\frac{1}{\sqrt{2}}(\ket{00}+\ket{11})$ and sends half to the Server. 
\item[5.] The Server runs Protocol \ref{Algo3} getting outcome vector $\mathbf{t}_j$. 
\end{enumerate}
\item[output qubits:] For $j\in O$, the Server prepares $\ket{+}$ states.
\item[graph state:] The Server entangles the $n+q$ qubits to a brickwork state by applying ctrl-$Z$ gates.

\end{description}

\begin{flushleft}
\underline{\emph{Computation phase}}
\vspace{-7pt}
\end{flushleft}

\begin{description}
\item[non-output qubits:] For $j\in O^c$

\begin{enumerate}

\item The computation oracle sends a random angle $\delta_j\in_R \{l\pi/4\}_{l=0}^7$ to the Server, who measures qubit $j$ in basis $\{\ket{+_{\delta_j}},\ket{-_{\delta_j}}\}$ and announces result $b_j$. 

\item For $j\in I$, the computation oracle chooses random measurement angles $\hat{\theta}_j^k$ for clients $C_k,k\neq j$ and sets:
\begin{equation}\label{app:angle}
\hat{\theta}_j^j:=\delta_j-\phi'_j-\sum_{k=1, k\neq j}^n (-1)^{\bigoplus_{i=k}^n t_j^i+a_j}\hat{\theta}_j^k
\end{equation} 
while for $j\in O^c\setminus I$, the computation oracle chooses random measurement angles $\hat{\theta}_j^k$ for clients $C_k,k\in[n-1]$ and sets:
\begin{equation}\label{app:angle2}
\hat{\theta}_j^n:=\delta_j-\phi'_j-\sum_{k=1}^{n-1} (-1)^{\bigoplus_{i=k}^{n-1} t_j^i}\hat{\theta}_j^k
\end{equation}
where undefined values are equal to zero, or otherwise:
\begin{itemize}
\item $\phi'_j=(-1)^{a_j+s_j^X}\phi_j+s^Z_j\pi+a_{f^{-1}(j)}\pi$.
\item $r_j=\bigoplus\limits_{k=1}^n r_j^k$.
\item $s_i=b_i\oplus r_i$, for $i\leq j$.
\end{itemize}
\item The clients measure the respective qubits in the received measurement bases.

\end{enumerate}
\item[output qubits:] For $j\in O$, the Server sends the ``encrypted'' quantum state to client $C_{j-q}$. All participants jointly compute $s_j^X$ and $s_j^Z$ and send it to client $C_{j-q}$, who applies operation $Z^{s_j^Z}X^{s_j^X}$ to retrieve the actual quantum output. 

\end{description}
\end{algorithm}


\clearpage 

\begin{algorithm}[!h]
\caption{Simulator for Server}
\label{app:sim}
\begin{description}
\item[non-output qubits:] For $j\in O^c$
\begin{enumerate}
\item $\sigma_S$ creates $n$ EPR pairs and sends one half of each to the Server.
\item $\sigma_S$ receives vector $\mathbf{t}_j$. 
\item $\sigma_S$ sends $\delta_j\in_R\{l\pi/4\}_{l=1}^7$ to the Server and receives a reply $b_j\in\{0,1\}$.
\end{enumerate}
\item[output qubits:] For $j\in O$
\begin{enumerate}
\item $\sigma_S$ receives $n$ qubits from the Server.
\item $\sigma_S$ sends the other halfs of the EPR pairs, the received quantum states, as well as $\delta_j$, $b_j$ and $\mathbf{t}_j$ for $j=1,\dots,q$, to the MPQC resource.
\end{enumerate}
\end{description}
\end{algorithm}

\begin{algorithm}[!h]
\caption{MPQC resource}
\label{app:mpqc}
\begin{enumerate}
\item The resource receives the $n$ qubits of $\rho_{in}$ from the clients, measurement angles $\{\phi_j\}_{j=1}^q$ and all the information from $\sigma_S$.
\item For $j\in I$: the resource performs a CNOT on the corresponding EPR half with the input qubit as control and measures the EPR half in the computational basis, getting result $a_j$. It chooses random measurement angles $\hat{\theta}_j^k$ for the qubits coming from clients $C_k,k\neq j$, sets:
\begin{equation*}
\hat{\theta}_j^j:=\delta_j-\phi'_j-\sum_{k=1, k\neq j}^n (-1)^{\bigoplus_{i=k}^n t_j^i+a_j}\hat{\theta}_j^k
\end{equation*} 
and measures the corresponding qubits. For $j\in O^c\setminus I$, it chooses random measurement angles $\hat{\theta}_j^k$ for clients $C_k,k\in[n-1]$ and sets:
\begin{equation*}
\hat{\theta}_j^n:=\delta_j-\phi'_j-\sum_{k=1}^{n-1} (-1)^{\bigoplus_{i=k}^{n-1} t_j^i}\hat{\theta}_j^k
\end{equation*}
where undefined values are equal to zero, or otherwise:
\begin{itemize}
\item $\phi'_j=(-1)^{a_j+s_j^X}\phi_j+s^Z_j\pi+a_{f^{-1}(j)}\pi$.
\item $r_j=\bigoplus\limits_{k=1}^n r_j^k$.
\item $s_i=b_i\oplus r_i$, for $i\leq j$.
\end{itemize}
\item For $j\in O$: the resource performs corrections $Z^{s_j^Z}X^{s_j^X}$ on the remaining qubits.
\end{enumerate}
\end{algorithm}

We can now check step-by-step the equivalence of the protocols described above and argue that  Eq.(\ref{secure1}) holds. We start by comparing Protocols \ref{app:protocol1} and \ref{app:protocol2}. In Protocol \ref{app:protocol1}, at Step \ref{1.1}, client $C_j$ chooses $a_j$ and $\theta_j^j$ uniformly at random from their domains and one-time pads his input state. In Protocol \ref{app:protocol2}, at Step \ref{2.1}, $C_j$ chooses uniformly at random $\hat{\theta}_j^j$ and teleports his input register to the Server, one-time padded with $X^{a_j}Z(\theta_j^j)$, where $\theta_j^j=\hat{\theta}_j^j+\pi r_j^j$. Since both $a_j$ and $\theta_j^j$ occur with the same probabilities, the state that the Server receives from client $C_j$ is the same in both protocols. Similarly in step \ref{1.2} of Protocol \ref{app:protocol1}  client  $C_k$ chooses uniformly at random $\theta_j^k$ and rotates the state $\ket{+}$ accordingly. In step \ref{2.2} of Protocol  \ref{app:protocol2} client $C_j$ chooses uniformly $\hat{\theta}_j^k$ and teleports to the Server the state $\ket{+}$ rotated by $\theta_j^k=\hat{\theta}_j^k+\pi r_j^k$. Since $\theta_j^k$ appears with the same probabilities for all clients in both protocols, the state described by Eq. \eqref{appeq:entangle1.1} and \eqref{appeq:entangle2.1} that the Server has, are the same. The same argumentation holds for step 4 of the two protocols, therefore at the end of the preparation phase, the Server has received exactly the same information from the clients. Finally, at the computation phase of the two protocols, the clients choose the measurement angles with the same probability.

We now check the equivalence of Protocols \ref{app:protocol2} and \ref{app:protocol3}. The main difference of Protocol \ref{app:protocol3} is that the phase flip (measurement of $r_j^k$) is delayed till after the measurement of the half of EPR pair by the Server. This is possible because the operation commutes with the teleportation. The states that the Server holds in both protocols are the same due to no signaling. In the computation phase, in Protocol \ref{app:protocol2}, the uniformly random value $\hat{\theta}_j^k$ defines the measurement angle $\delta_j$, while in \ref{app:protocol3} the uniformly random value $\delta_j$ defines $\hat{\theta}_j^k$ and thus the delayed step of the teleportation.

Finally, the combined simulator and Ideal Resource defined in Protocols \ref{app:sim} and \ref{app:mpqc} are just a separation and renaming of the preparation and computation tasks that the clients are required to do. It is easy to see that the Ideal Resource described in Protocol \ref{app:mpqc} fits the requirements of the MPQC resource defined in the main text, and therefore we have proven that the communication Protocol is equivalent to the Ideal Resource and a simulator for a dishonest Server: $\pi_1\dots\pi_n\mathcal{O}= \sigma_S \mathcal{M}$.

\vspace{0.5in}

\end{document}